 			\def\version{20 December, 2011}                        %
\newtheorem{theorem}{Theorem}[section] 
\newtheorem{lemma}[theorem]{Lemma} 
\newtheorem{prop}[theorem] {Proposition}
\newtheorem{hypo}{Hypothesis}
\newcommand{\N}{\mathbb{N}}
\newcommand{\R}{\mathbb{R}}
\newcommand{\dd}{\mathrm{d}} 
\newcommand{\eps}{\epsilon}
\renewcommand{\eps}{\varepsilon}
\newcommand{\smfrac}[2]{\textstyle{\frac {#1}{#2}}}
\newcommand{\vect}[1]{\boldsymbol{#1}}
\DeclareMathOperator{\supp}{supp}
\newcommand{\sat}{\mathrm{sat}}
\newcommand{\ideal}{{\rm ideal}}
\newcommand{\cl}{{\rm cl}} 
\renewcommand{\sat}{\mathrm{sat}}
\newcommand{\Ncal}   {{\mathcal N }} 
\newcommand{\ssup}[1] {{\scriptscriptstyle{({#1}})}} 
\def\1{{\mathchoice {1\mskip-4mu\mathrm l}      
{1\mskip-4mu\mathrm l} 
{1\mskip-4.5mu\mathrm l} {1\mskip-5mu\mathrm l}}} 
\newcommand{\e}   {{\operatorname e }}
\begin{document} 

\title[Ideal mixture approximation of cluster size distributions at low density]{Ideal mixture approximation\\ of cluster size distributions at low density}

\author[Sabine Jansen and Wolfgang K{\"o}nig]{} 

\maketitle

\thispagestyle{empty} 
\vspace{0.2cm} 
 
\centerline 
{\sc Sabine Jansen$ $\footnote{Weierstrass Institute Berlin, Mohrenstr.~39, 10117 Berlin, {\tt jansen@wias-berlin.de, koenig@wias-berlin.de}}, Wolfgang K{\"o}nig$ $\footnotemark[1]$^,$\footnote{Technische Universit\"at Berlin, Str.~des 17.~Juni 136, 10623 Berlin}}

\medskip

\centerline{WIAS Berlin and TU Berlin}
\vspace{0.2cm}

\centerline{\small(\version)}

\vspace{.2cm}

\begin{quote}	 
	We consider an interacting particle system in continuous configuration space.
	 The pair interaction has an attractive part. 
	We show that, at low density, the system behaves approximately like an ideal 
	mixture of clusters (droplets): we  prove rigorous bounds (a) for the constrained free energy associated with a given cluster size distribution, considered as an order parameter, 
	(b) for the free energy, obtained by minimising over the order parameter, 
	and (c) for the minimising cluster size distributions. It is known that, under suitable assumptions, 
	the ideal mixture has a transition from a gas phase to a condensed phase as the density is varied; our bounds hold both in the gas phase and in the coexistence region of the ideal mixture.
	The present paper improves our earlier results by taking into account the mixing entropy. 
\end{quote}

\bigskip\noindent
{\it MSC 2010.} Primary \texttt{82B21}, Secondary \texttt{60F10; 82B31; 82B05.}

\medskip\noindent
{\it Keywords and phrases.} Classical particle system, canonical ensemble, equilibrium statistical mechanics, dilute system, large deviations.

\setcounter{section}{0} 
\setcounter{tocdepth}{1}

\section{Introduction and main results}

\noindent The idea to treat an interacting particle gas as an approximately ideal mixture of droplets (clusters) is classical and of wide-spread use in statistical mechanics, thermodynamics, and physical chemistry. It goes sometimes under the name of \emph{Frenkel-Band theory of association equilibrium}, see the textbook~\cite{hillbook}. From a more mathematical perspective, this droplet picture has been used in investigations of percolation properties for lattice and continuum systems~\cite{penrose-lebowitz,muermann,ghm01}. 

This article's main concern is to make this droplet picture rigorous. To the best of our knowledge, all existing results work in the grand-canonical ensemble at sufficiently negative values of the chemical potential, for which one expects that all clusters are finite. 
In contrast, we work in the canonical ensemble. This allows us to give results that hold also in a regime where there might be infinite clusters.

In this work, we consider the free energies, the constrained free energies with fixed cluster size distributions, and the optimal distribution itself and derive bounds for the deviation from the ideal model. These bounds decay exponentially fast as a function of the inverse temperature at low densities, respectively they decay as a power of the density.

\subsection{The model} 

\noindent  We consider a system $\vect{x}=(x_1,\dots,x_N)$ of $N$ particles in a box $\Lambda=[0,L]^d$ with interaction given by
\begin{equation}\label{Udef}
U_N(\vect{x})=U_N(x_1,\ldots, x_N)= \sum_{1\leq i<j\leq N} v(|x_i-x_j|),
\end{equation}
where $v\colon [0,\infty) \to \R\cup \{\infty\}$ is a Lennard-Jones-type potential. Our precise assumptions, which are the same as in \cite{jkm}, are as follows.

\medskip

\noindent \textbf{Assumption (V).} {\it The function $v\colon [0,\infty) \to \R\cup \{\infty\}$ satisfies the following.
\begin{itemize}
	\item[(1)] $v$ is finite except possibly for a hard core: there is a $r_\mathrm{hc}\geq 0$ 
		such that 
		$$ \forall r \leq  r_\mathrm{hc}:\ v(r)= \infty, \qquad 
			\forall r > r_\mathrm{hc}:\ v(r) \in \R.
		$$
	\item[(2)] $v$ is stable.
	\item[(3)] The support of $v$ is compact, more precisely,  $b:=\sup \supp v <\infty$.
	\item[(4)] $v$ has an attractive tail: there is a $\delta>0$ such that $v(r)<0$ for all 
		$r \in (b-\delta,b)$.
	\item[(5)] $v$ is continuous in $[r_\mathrm{hc},\infty)\cap(0,\infty)$.
\end{itemize}
}
\medskip

Throughout this paper, Assumption (V) will be in force without further mentioning. We consider the thermodynamic limit $N,L\to\infty$ such that $|\Lambda|=L^d=N/\rho$ for some particle density $\rho\in(0,\infty)$ at positive and finite inverse temperature $\beta\in(0,\infty)$. The existence of the free energy per unit volume is well-known: there is a $\rho_\mathrm{cp}\in(0,\infty]$, the \emph{close-packing density}, such that for all $\rho\in (0,\rho_\mathrm{cp})$,  the limit  
\begin{equation}\label{FreeEnergy}
f(\beta,\rho):= - \frac{1}{\beta} \lim_{N,L \to \infty,L^d=N/\rho} 
\frac{1}{|\Lambda|} \log \Big(\frac{1}{N!} \int_{\Lambda^N}  \e^{-\beta U_N(\vect{x})} \,\dd \vect{x}\Big),\qquad\beta \in(0,\infty)
\end{equation}
exists in $\R$.  In the following, we always assume that $\beta \in(0,\infty)$ and $\rho\in (0,\rho_\mathrm{cp})$.

Our main concern is the cluster size distribution that is induced by the Gibbs measure as a random sequence. Fix $R\in(b,\infty)$. For a given configuration $\vect{x}=(x_1,\ldots,x_N) \in \Lambda^N$, draw an edge between two points $x_i$ and $x_j$ if their distance $|x_i - x_j|$ is $\leq R$. In this way, the configuration splits into connected components, which we call \emph{clusters}. Let 
$N_k(\vect{x})$ be the number of $k$-clusters, i.e., components with $k$ particles,  and let
\begin{equation*}
\rho_{k,\Lambda}(\vect{x}):= \frac{N_k(\vect{x})}{|\Lambda|}
\end{equation*}
be the number of $k$-clusters per unit volume. We consider the {\em cluster size distribution} $\vect{\rho}_\Lambda=( \rho_{k,\Lambda})_{k\in \N}$ as a random variable in the set of all sequences $\vect{\rho}=(\rho_k)_{k\in\N}\in[0,\infty)^\N$. One of the main results of \cite{jkm} is the existence of a {\em rate function} $f(\beta,\rho,\cdot)\colon [0,\infty)^\N \to \R \cup \{\infty\}$ such that, in the above thermodynamic limit, 
\begin{equation}\label{LDP}
-\frac1\beta \frac 1{ |\Lambda|}\log\Big(\frac{1}{N!} \int_{\Lambda^N} \e^{-\beta U_N(\vect{x})} 
		\1\bigl\{\vect{\rho}_\Lambda(\vect{x})\in\cdot \bigr\}\, \dd \vect{x} \Big)
		\Longrightarrow \inf_{\vect{\rho}\in\cdot}f(\beta,\rho,\vect{\rho} ),
\end{equation}
in the sense of a large-deviation principle. That is, the limit in \eqref{LDP} holds in the weak sense, and the level sets of the rate function $f(\beta,\rho,\cdot)$ are compact; in particular, $f(\beta,\rho,\cdot)$ is lower semi-continuous.  Furthermore,
$f(\beta,\rho,\cdot)$ is convex. Moreover, if $f(\beta,\rho,\vect{\rho})$ is finite, then necessarily $\sum_{k=1}^\infty k \rho_k \leq \rho$. At low density $\rho$, the converse is also true, i.e., $f(\beta,\rho,\vect{\rho})<\infty$ for any $\vect{\rho}$ such that $\sum_{k=1}^\infty k \rho_k \leq \rho$. 
The relation with the free energy  given in \eqref{FreeEnergy} is 
\begin{equation*}
	f(\beta,\rho) = \min \Bigl \lbrace f\bigl(\beta,\rho,\vect{\rho} \bigr) 
		\,\Big|\, \vect{\rho} \in [0,\infty)^\N,\ \sum_{k\in\N} k \rho_k \leq \rho \Bigr\rbrace.
\end{equation*}
Hence, the minimiser(s) of $f(\beta,\rho,\cdot)$ play an important role as the optimal configuration(s) of the system.

\subsection{The ideal-mixture model}

 \noindent We now introduce the main object in terms of which we will approximate the above model. The {\it cluster partition function} of a  $k$-cluster is introduced as
\begin{equation}\label{Zcldef}
   Z_k^\cl(\beta) := \frac{1}{k!} \int_{ (\R^d)^{k-1}} \e^{-\beta U_k(0,x_2,\ldots,x_k)} 
	\1\big\{\{0,x_2,\ldots,x_k\}\ \text{connected}\big\}\,\dd x_2\cdots \dd x_k,
\end{equation}
and the associated cluster free energy per particle is $f_k^\cl(\beta): = - \frac{1}{\beta k } \log Z_k^\cl(\beta)$. It is known that its thermodynamic limit,
\begin{equation}\label{cfe}
f_\infty^\cl(\beta)= \lim_{k\to \infty} f_k^\cl(\beta),
\end{equation}
exists in $\R$. Indeed, this was proved in \cite{DS84}, given that $(f_k^\cl(\beta))_{k\in \N}$ is bounded, and the boundedness of $(f_k^\cl(\beta))_{k\in \N}$ was 
derived in \cite{DS84} in dimension $d=2$ and $d=3$ and in~\cite[Lemma 4.3 and 4.5]{jkm} in all dimensions. 

For $\vect{\rho}=(\rho_k)_{k\in\N}$, let 
\begin{equation}\label{fideal}
	f^\ideal(\beta,\rho,\vect{\rho}) :=  \sum_{k\in \N} k \rho_k f_k^\cl(\beta) + \Bigl(\rho- \sum_{k\in \N} k \rho_k\Bigr) f_\infty^\cl(\beta) 
		+ \frac 1\beta \sum_{k\in \N} \rho_k (\log \rho_k - 1). 
\end{equation}
This rate function describes the large deviations of the cluster size distribution in an idealised model that neglects the excluded-volume effect:  the first term describes the 
internal free energy coming from  the clusters of finite size, the second term the analogous contribution from clusters of infinite size, and the last term describes the entropy of placing all these clusters into the volume, not taking care of being separated from each other. See Section~\ref{sect-discussion} for an integer partition model that has $f^\ideal(\beta,\rho,\cdot)$ as a rate function.

We are going to compare $f(\beta,\rho,\vect{\rho})$ with $f^\ideal(\beta,\rho,\vect{\rho})$ for small densities $\rho$ and low temperatures $1/\beta$. That these two should be close to each other is intuitively clear, since low temperature should ensure that clusters assume a compact shape, and low density should give enough space to place the clusters at positive mutual distance. 
 The main purpose of  this paper is to make this reasoning rigorous.

\subsection{Our hypotheses}\label{sect-hypo}

\noindent We need further assumptions about ground states and the cluster partition functions. Roughly speaking, we need to assume some H\"older continuity of the energy $U_k(\cdot)$ close to the ground states and that the relevant clusters at zero and low temperature, respectively, have a compact shape, i.e., occupy at most a box with volume of order of the number of particles. These hypotheses are believed to be true for many potentials of the type in Assumption (V), and they can be seen to be satisfied for the ground states. However, for positive temperatures, their rigorous understanding has not yet been completed.

The purpose of the hypothesis of bounded density is the following. As we indicated above, the fundamental idea is to split the configuration into its clusters and to collect the internal free energies of all the clusters. However, one also needs to describe the entropy of a configuration, that is, the combinatorial complexity for the placement of all the clusters into some cube. This task is very hard without further information. In the low-density approximation, we will solve this task by neglecting the excluded-volume effect, which makes it much easier. For this, we need to know that the clusters do not require a large diameter. Our second hypothesis ensure this for the ground states, and the last two hypotheses ensure this for positive low temperature.

First we formulate our hypothesis about uniform H\"older continuity of the energy around the ground states and a strong form of stability. Recall that the pair potential $v$ is called stable if $\frac 1k\inf_{\vect{x}\in(\R^d)^k}U_k(\vect{x})$ is bounded from below in $k$, which means that the ground states do not clump too strongly.

\begin{hypo} \label{ass:hoelder}
There is a $r_\mathrm{min}\in(r_\mathrm{hc},\infty)$ such that $v$ is uniformly H{\"o}lder continuous in $[r_\mathrm{min},\infty)$ and, for all $k \in \N$, every minimiser $(x_1,\ldots,x_k) \in (\R^d)^k$ of the energy $U_k$ has interparticle distance lower bounded as $|x_i- x_j| \geq r_\mathrm{min}$ for any $i\neq j$.
\end{hypo}

This hypothesis can be seen to be satisfied under some mild additional assumptions on $v$ relating the negative part of $v$ to its behaviour at zero; see \cite[Proof of Lemma~3.1]{CKMS10} or \cite[Lemma 2.2]{Th06}. The H{\"o}lder continuity allows us to give low-temperature estimates of the form 
\begin{equation*}
	-\frac{1}{\beta k}\log \frac{1}{k!} \int_{(\R^d)^k} \e^{-\beta U(\vect{x})}\, \dd \vect{x} 
	= \frac{1}{k} \inf_{\vect{x}\in (\R^d)^k} U(\vect{x}) + O\Big(\frac{\log \beta}{\beta}\Big) \qquad \text{as}\ \beta \to \infty,
\end{equation*}
uniformly in $k\in \N$.

Our next hypothesis says that the minimising configurations (the ground states) do not occupy more space than a box with volume of order of the number of particles.

\begin{hypo}[Ground states have a compact shape] \label{ass:maxdist}
 	There is a constant $c>0$ such that for all $k\in \N$
	 every minimiser $(x_1,\ldots,x_k) \in (\R^d)^k$ of the energy $U_k$
	has interparticle distance upper bounded by $|x_i - x_j| \leq c k ^{1/d}$ for any $i\neq j$.
\end{hypo}

This hypothesis is known to be satisfied for some classes of potentials having a large intersection with those satisfying Assumption (V), however, only in dimension one and two. See~\cite{R81} and~\cite{yeung}.

Now we proceed with two more restrictive hypotheses, whose validity has actually not been clarified for all interesting potentials of the type that we consider. They concern, at positive  sufficiently low temperature, the diameter of the relevant clusters. 

An important object is the internal cluster energy coming from a box of volume $a^d$:
\begin{equation}\label{Zcladef}
   Z_k^{\cl,a}(\beta) := \frac{1}{k!a^d} \int_{ ([0,a]^d)^{k}} \e^{-\beta U_k(\vect{x})} 
	\1\{\vect{x}\ \text{connected}\}\,\dd x_1\cdots \dd x_k.	
\end{equation}
The reader may  check that $\lim_{a\to \infty} Z_k^{\cl,a}(\beta) = Z_k^\cl(\beta)$ holds for every fixed $k$ and $\beta$.
The corresponding free energy is defined as $f_k^{\cl,a}(\beta): = - \frac{1}{\beta k } \log Z_k^{\cl,a}(\beta)$.
It is tempting to believe (and this is the content of our next hypothesis) that, at least at sufficiently low temperature, a box of volume of order $k$ should capture almost all the internal free energy of a cluster:

\begin{hypo}[Clusters have compact shape] \label{collapse} 
For some $c\in(0,\infty)$ and every sufficiently large $\beta$,
	\begin{equation} \label{eq:collapse}
		\lim_{k\to \infty} \frac{1}{k} \log Z_k^{\cl, ck ^{1/d}}(\beta) 
		= \lim_{k\to \infty} \frac{1}{k} \log Z_k^{\cl}(\beta).
	\end{equation}
\end{hypo}

However, this hypothesis has not even been proved for the relatively simple case of the two-dimensional Ising model, see the discussion in~\cite{DS86}. It is commonly believed that \eqref{eq:collapse} is true for low temperature and wrong for high temperature, since the cluster is believed to assume a tree-like structure and to occupy therefore a much larger portion of space. This phenomenon is often called a \emph{collapse transition}: as the temperature decreases below a critical value, the volume per particle collapses to some finite value.

The following hypothesis is in the spirit of  Hypothesis~\ref{collapse} and goes much beyond it: if, for large $\beta$, the relevant configurations for $f_k^\cl(\beta)$ have a compact shape, then the number of particles that have not the optimal number of neighbours should be of surface order. Therefore the correction to the large-$k$ asymptotics should be of surface order of a ball with volume $\approx k$:

\begin{hypo} \label{secondorder}
For some $C>0$ and all sufficiently large $\beta$,
	\begin{equation}\label{fkbound}
		k f_k^\cl(\beta) - k f_\infty^\cl(\beta) \geq C  k^{1-1/d},\qquad k\in \N.
	\end{equation}
\end{hypo}

To the best of our knowledge, such a deep statement has not been proved for any interesting potential satisfying Assumption (V). Actually for our proofs we only need a lower bound against $C  k^\eps$ for some $\eps>0$ instead of $C  k^{1-1/d}$.

\subsection{Our results}\label{sec-Results}

\noindent Our first main result  applies to all cluster size distributions $\vect{\rho}$, not only minimisers of the rate functions, and is therefore possibly of interest for non-equilibrium thermodynamic models. 

\begin{theorem}[Comparison of $f(\beta,\rho,\cdot)$ and $f^\ideal(\beta,\rho,\cdot)$]  \label{thm:bounds}
Let the pair potential $v$ satisfy Hypotheses \ref{ass:hoelder} and~\ref{ass:maxdist}. Then there are $\overline \rho, \overline\beta$ and $ C>0$ such that for all $\beta \in[\overline\beta,\infty)$, $\rho \in (0,\overline \rho)$, and $K\in \N$ with $K <(\rho/3)^{-1/(d+1)}$,  and  all $\vect{\rho}=(\rho_k)_{k\in\N}\in[0,\infty)^\N$ satisfying $\sum_{k\in\N}k\rho_k\leq \rho$,
\begin{equation} \label{ineq:idealunif}
f^\ideal(\beta,\rho,\vect{\rho})  \leq  f(\beta,\rho,\vect{\rho}) \leq   f^\ideal(\beta,\rho, \vect{\rho})  + \frac{C}{\beta} \eps_K(\beta,\rho,\vect{\rho}),
\end{equation}
where
\begin{equation}\label{epsdef}
\eps_K(\beta,\rho,\vect{\rho}) = \rho_{\leq K} ^{(d+2)/(d+1)} + (\rho - \rho_{\leq K}) \log \beta -  m_{>K} \log m_{>K}  
\end{equation}
and we abbreviated $\rho_{\leq K}:= \sum_{k=1}^K k \rho_k$ and $m_{>K}:= \sum_{k=K+1}^\infty \rho_k$. 
If in addition Hypothesis~\ref{collapse} holds, then in \eqref{epsdef}
we can replace $(\rho-\rho_{\leq K})$ with $\sum_{k=K+1}^\infty k \rho_k$. 
\end{theorem}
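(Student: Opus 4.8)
\medskip

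\noindent\textit{Plan of proof.} Both inequalities in~\eqref{ineq:idealunif} concern the rate function $f(\beta,\rho,\cdot)$ of the large-deviation principle~\eqref{LDP}. Since that rate function is lower semi-continuous and convex and $[0,\infty)^\N$ carries the product topology, the first step is to reduce the claim to two-sided estimates for the finite-volume constrained partition functions
\[
  Z_{N,\Lambda}(\vect\rho,M,\delta):=\frac1{N!}\int_{\Lambda^N}\e^{-\beta U_N(\vect x)}\,
  \1\bigl\{\,|\rho_{k,\Lambda}(\vect x)-\rho_k|\le\delta\ \text{ for all }k\le M\,\bigr\}\,\dd\vect x ;
\]
concretely, it suffices to show $Z_{N,\Lambda}(\vect\rho,M,\delta)\le\e^{-\beta|\Lambda|(f^\ideal(\beta,\rho,\vect\rho)-o(1))}$ and $Z_{N,\Lambda}(\vect\rho,M,\delta)\ge\e^{-\beta|\Lambda|(f^\ideal(\beta,\rho,\vect\rho)+\frac C\beta\eps_K(\beta,\rho,\vect\rho)+o(1))}$, where $o(1)\to0$ in the iterated limit $N,L\to\infty$ with $L^d=N/\rho$, then $\delta\downarrow0$, then $M\to\infty$; for the lower bound one may additionally approximate $\vect\rho$ by finitely supported sequences. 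The two directions are independent.

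\smallskip

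\noindent\textit{Lower bound $f^\ideal\le f$.} Because $R>b=\sup\supp v$, the energy splits \emph{exactly} over the $R$-clusters, $U_N(\vect x)=\sum_{C}U_{|C|}(\vect x_C)$. Fixing the cluster counts $(n_k)_{k\in\N}$ with $\sum_k kn_k=N$, summing over the $N!/\prod_k\bigl((k!)^{n_k}n_k!\bigr)$ ways of distributing the particle labels, dropping the requirement that distinct clusters be $R$-separated, and bounding each single-cluster integral by $\int_{\Lambda^k}\e^{-\beta U_k}\1\{\text{connected}\}\le|\Lambda|\,k!\,Z_k^\cl(\beta)$ (one particle in $\Lambda$, the remaining $k-1$ in $\R^d$), one obtains
\[
  Z_{N,\Lambda}(\vect\rho,M,\delta)\ \le\ \sum_{(n_k)}\ \prod_{k\in\N}\frac{\bigl(|\Lambda|\,Z_k^\cl(\beta)\bigr)^{n_k}}{n_k!},
\]
the sum subject to $\sum_k kn_k=N$ and $|n_k/|\Lambda|-\rho_k|\le\delta$ for $k\le M$. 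Applying Stirling's formula to the factorials $n_k!$ --- which is what produces the mixing entropy $\frac1\beta\sum_k\rho_k(\log\rho_k-1)$ and is the point where the present argument improves on~\cite{jkm} --- together with Laplace's method for the constrained sum, the identity $Z_k^\cl=\e^{-\beta k f_k^\cl}$, and the convergence $\frac1k\log Z_k^\cl(\beta)\to-\beta f_\infty^\cl(\beta)$ from~\eqref{cfe} and the boundedness of $(f_k^\cl)_k$ (cf.~\cite{DS84} and~\cite[Lemma 4.3 and 4.5]{jkm}), the iterated limit of $-\frac1{\beta|\Lambda|}\log$ of the right-hand side equals $f^\ideal(\beta,\rho,\vect\rho)$, with no error term.

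\smallskip

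\noindent\textit{Upper bound $f\le f^\ideal+\frac C\beta\eps_K$.} This is the substantial half; here one constructs an explicit family of configurations contributing to $Z_{N,\Lambda}(\vect\rho,M,\delta)$ and bounds its measure from below. Partition $\Lambda$ into cubic cells of a \emph{single} side length $\ell$ of order $\rho_{\le K}^{-1/(d+1)}$. The hypothesis $K<(\rho/3)^{-1/(d+1)}$ guarantees, since $\rho_{\le K}\le\rho$, that a minimiser of $U_k$ for any $k\le K$ --- which by Hypothesis~\ref{ass:maxdist} has diameter $\le ck^{1/d}$ --- fits into a cell together with an $R$-wide buffer, and that the number $|\Lambda|/\ell^d$ of cells comfortably exceeds $n_{\le K}:=\sum_{k\le K}n_k$, where $n_k:=\lfloor\rho_k|\Lambda|\rfloor$. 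Place one $k$-cluster per cell for $k\le K$, each ranging over connected configurations of diameter $\le ck^{1/d}$; by Hypothesis~\ref{ass:hoelder} and the uniform low-temperature H\"older estimate these carry a fraction $1-\e^{-c\beta}$ of $Z_k^\cl(\beta)$, so this reproduces the ideal weight $\prod_{k\le K}(|\Lambda|Z_k^\cl(\beta))^{n_k}/n_k!$ up to two packing losses, both of logarithmic size $\mathrm{const}\cdot|\Lambda|\,\rho_{\le K}^{(d+2)/(d+1)}$ for $\ell$ of the stated order: confining the ``anchor'' of a $k$-cluster to a cube of volume $(\ell-\mathrm{const}\,k^{1/d})^d$ rather than $\ell^d$ (contributing $\lesssim|\Lambda|\,\rho_{\le K}/\ell$), and selecting $n_{\le K}$ distinct occupied cells out of $|\Lambda|/\ell^d$ (contributing $\lesssim|\Lambda|\,\rho_{\le K}^2\,\ell^d$). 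For the monitored sizes $K<k\le M$, place $n_k=\lfloor\rho_k|\Lambda|\rfloor$ compact near-ground-state $k$-clusters (compact by Hypothesis~\ref{ass:maxdist}), whose internal free energy per particle lies within $O(\log\beta/\beta)$ of $f_k^\cl(\beta)$ \emph{uniformly in $k$} by the H\"older estimate; realise the remaining mass $\rho|\Lambda|-\sum_{k\le M}kn_k$ --- large unmonitored clusters together with the infinite-cluster mass --- as near-ground-state droplets of a size $m=m(\beta)\to\infty$ chosen so that $|f_m^\cl(\beta)-f_\infty^\cl(\beta)|\le\log\beta/\beta$, again within $O(\log\beta/\beta)$ per particle of $f_\infty^\cl(\beta)$. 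Measured against $f^\ideal(\beta,\rho,\vect\rho)$, these treatments of the size-$>K$ mass contribute exactly the term $\frac C\beta(\rho-\rho_{\le K})\log\beta$. Finally, the entropy in $f^\ideal$ carried by clusters of size $>M$ is \emph{not} reproduced by the droplets; by the maximum-entropy bound for size distributions supported on $\{M+1,M+2,\dots\}$ with mean at most $\rho/m_{>M}$, together with monotonicity of $x\mapsto-x\log x$ near $0$, this costs at most $\mathrm{const}\cdot(-m_{>K}\log m_{>K})/\beta$. Assembling the three losses and passing to the iterated limit yields $f^\ideal(\beta,\rho,\vect\rho)+\frac C\beta\eps_K(\beta,\rho,\vect\rho)$.

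\smallskip

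\noindent\textit{The refinement under Hypothesis~\ref{collapse}, and the main obstacle.} If Hypothesis~\ref{collapse} holds, then the infinite-cluster mass can itself be realised as compact droplets whose box-restricted partition functions $Z_m^{\cl,cm^{1/d}}(\beta)$ capture $Z_m^\cl(\beta)$, hence $\e^{-\beta m f_\infty^\cl(\beta)}$, without the $\log\beta$ loss inherent in a pure ground-state approximation; then only $\sum_{k>K}k\rho_k$ remains multiplied by $\log\beta$ in~\eqref{epsdef}. The delicate points are entirely on the upper-bound side: first, tuning the single cell scale $\ell\sim\rho_{\le K}^{-1/(d+1)}$ so that the anchor-confinement loss and the cell-selection loss are \emph{simultaneously} of order $\rho_{\le K}^{(d+2)/(d+1)}$ --- this is precisely where the restriction $K<(\rho/3)^{-1/(d+1)}$ enters; and second, the quantitative treatment of all clusters of size $>K$ with only the ground-state information of Hypotheses~\ref{ass:hoelder}--\ref{ass:maxdist} at one's disposal, so that their contribution, compared against $\sum_{k>K}k\rho_k f_k^\cl(\beta)+(\rho-\sum_k k\rho_k)f_\infty^\cl(\beta)$, collapses into the single clean term $(\rho-\rho_{\le K})\log\beta$ (respectively $\sum_{k>K}k\rho_k\log\beta$ under Hypothesis~\ref{collapse}).
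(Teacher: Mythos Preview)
Your lower-bound sketch is correct and is essentially the content of \cite[Lemma~3.1]{jkm}, which the paper simply cites.

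The upper bound is where your approach diverges from the paper's and where there are genuine gaps. The paper does \emph{not} attempt a single construction accommodating all cluster sizes in one box. Instead it exploits the joint convexity of $(\rho,\vect\rho)\mapsto f(\beta,\rho,\vect\rho)$ to write any admissible pair as a convex combination of a ``small'' piece $(\underline\rho,\vect\rho^{\mathrm{small}})$ supported on $\{1,\dots,K\}$ and a ``large'' piece $(\overline\rho,\vect\rho^{\mathrm{large}})$ supported on $\{K+1,K+2,\dots\}$, and then bounds $f$ on each piece separately (Propositions~\ref{prop:case1} and~\ref{prop:case2}). This decoupling is the main organisational idea you are missing.

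Your monolithic construction has two concrete problems. First, the claim that connected $k$-configurations of diameter $\le ck^{1/d}$ carry a fraction $1-\e^{-c\beta}$ of $Z_k^\cl(\beta)$ is not a consequence of Hypotheses~\ref{ass:hoelder}--\ref{ass:maxdist}: Hypothesis~\ref{ass:maxdist} constrains only ground states, and at positive temperature nothing prevents a typical connected $k$-cluster from having diameter of order $kR$; a statement of that type is essentially Hypothesis~\ref{collapse}. In the paper's small-cluster estimate (Lemmas~\ref{lem:fkcomp}--\ref{lem:ubfinite}, Proposition~\ref{prop:case1}) no hypothesis beyond Assumption~(V) is used: one takes cells of side $A>3KR$, invokes the \emph{trivial} fact that a connected $k$-cluster has diameter $\le(k-1)R$, and obtains the purely geometric fraction $(1-2kR/A)^d$, independent of $\beta$. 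The restriction $K<(\rho/3)^{-1/(d+1)}$ is exactly what makes the choice $A\sim\rho^{-1/(d+1)}$ compatible with $A>3KR$; it has nothing to do with Hypothesis~\ref{ass:maxdist}. Hypotheses~\ref{ass:hoelder} and~\ref{ass:maxdist} enter only in the large-cluster bound (Proposition~\ref{prop:case2}), through \cite[Theorem~1.8 and Lemma~4.3]{jkm}.

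Second, you do not explain how the large near-ground-state droplets (sizes $K<k\le M$ and the ``infinite'' mass) are placed in $\Lambda$ so as to be $R$-separated both from one another and from the cells already occupied by small clusters. In a single-box construction this bookkeeping is non-trivial, since the large droplets may be much bigger than the cell scale $\ell$. The paper's convexity split is precisely the device that sidesteps this issue: the small clusters live at density $\underline\rho$ in their own thermodynamic limit, the large ones at the fixed density $\overline\rho$ in theirs, and only afterwards are the two free-energy bounds combined via~\eqref{eq:ft}.
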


Theorem~\ref{thm:bounds} is proved in Section~\ref{sect-proofthmbounds}. Next, we compare the minimimum and the minimisers under the two stronger hypotheses on the compact shape of the relevant clusters at positive temperature and the finite size correction of the cluster free energy. Let 
\begin{equation}\label{fidealdef}
f^\ideal(\beta,\rho): = \inf\Bigl\lbrace f^\ideal(\beta,\rho, \vect{\rho}) \,\Big|\,\vect{\rho}\in[0,\infty)^\N,\sum_{k\in\N} k \rho_k \leq \rho \Bigr \rbrace.
\end{equation}
It is not difficult to see that there is a unique minimiser $\vect{\rho}^\ideal(\beta,\rho)=(\rho^\ideal_k(\beta,\rho))_{k\in \N}$. 
We set $m^\ideal(\beta,\rho):= \sum_{k=1}^\infty \rho_k^\ideal(\beta,\rho)\in[0,\rho]$. 

\begin{theorem} \label{thm:exponential}
Suppose that Hypotheses~\ref{ass:hoelder}, \ref{collapse} and~\ref{secondorder} hold, and assume that $d\geq 2$. Then there are $\overline\beta,\overline \rho, C,C'>0$ such that, for all $\beta \in[\overline\beta,\infty)$ and $\rho \in (0,\overline \rho)$, the following holds.
\begin{enumerate}
\item Free energy: 
\begin{equation} \label{eq:expoa}
0 \leq f(\beta,\rho) - f^\ideal(\beta,\rho)  \leq \frac C \beta m^\ideal(\beta,\rho) \rho^{1/(d+1)}. 
\end{equation} 

\item Let $\vect{\rho}= \vect{\rho}^\ssup{\beta,\rho}=(\rho_k)_{k\in\N}$ be a minimiser of $f(\beta,\rho,\cdot)$, and put $m:= \sum_{k\in \N} \rho_k$. Then 
    \begin{equation}\label{MinEsti}
	\left| \frac{m}{m^\ideal(\beta,\rho)} - 1\right|^2  \leq  C' \rho^{1/(d+1)}\qquad\mbox{and}\qquad 
	 \frac 12 H\Bigl( \frac{ \vect{\rho}}{m} ; \frac{ \vect{\rho}^\ideal(\beta,\rho)}{m^\ideal(\beta,\rho)}\Bigr)  \leq  C' \rho^{1/(d+1)}.
    \end{equation}
\end{enumerate}
\end{theorem}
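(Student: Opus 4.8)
\emph{Proof plan.} The lower bound in \eqref{eq:expoa} is immediate: minimising the pointwise estimate $f^\ideal(\beta,\rho,\vect\rho)\le f(\beta,\rho,\vect\rho)$ of Theorem~\ref{thm:bounds} over admissible $\vect\rho$ gives $f^\ideal(\beta,\rho)\le f(\beta,\rho)$. So the theorem reduces to the upper bound in \eqref{eq:expoa} together with the deduction of \eqref{MinEsti} from it, and both rest on the explicit shape of the ideal minimiser. Writing $\Delta f_k:=f_k^\cl(\beta)-f_\infty^\cl(\beta)$, a Lagrange-multiplier analysis of the convex problem \eqref{fidealdef} yields $\rho_k^\ideal(\beta,\rho)=z^k\e^{-\beta k\Delta f_k}$, the fugacity $z=z(\beta,\rho)\in(0,1]$ being determined by $\sum_k k\,z^k\e^{-\beta k\Delta f_k}=\min\{\rho,\rho_{\mathrm c}(\beta)\}$ with $\rho_{\mathrm c}(\beta):=\sum_k k\,\e^{-\beta k\Delta f_k}$: one is in the ``gas'' regime, $z<1$ and $\sum_k k\rho_k^\ideal=\rho$, when $\rho<\rho_{\mathrm c}(\beta)$, and in the ``condensed'' regime, $z=1$ and $\rho-\sum_k k\rho_k^\ideal=\rho-\rho_{\mathrm c}(\beta)>0$, when $\rho\ge\rho_{\mathrm c}(\beta)$. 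By Hypothesis~\ref{secondorder}, $\Delta f_k\ge Ck^{-1/d}$, so $\e^{-\beta k\Delta f_k}\le\e^{-\beta Ck^{1-1/d}}$; since $d\ge2$ this makes $(\rho_k^\ideal)_k$ rapidly decaying, makes $\rho_{\mathrm c}(\beta)$ finite, and --- comparing $\sum_k k\rho_k^\ideal$ with $m^\ideal=\sum_k\rho_k^\ideal$ --- shows $c^{-1}\min\{\rho,\rho_{\mathrm c}(\beta)\}\le m^\ideal\le\rho$ uniformly for $\beta\ge\overline\beta$.

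For the upper bound in \eqref{eq:expoa}, I would evaluate the true rate function at $\vect\rho^\ideal$: by Theorem~\ref{thm:bounds} (using that Hypothesis~\ref{collapse} lets one replace $\rho-\rho_{\le K}$ by $\sum_{k>K}k\rho_k$ in \eqref{epsdef}),
\[
f(\beta,\rho)\le f(\beta,\rho,\vect\rho^\ideal)\le f^\ideal(\beta,\rho)+\frac C\beta\,\eps_K(\beta,\rho,\vect\rho^\ideal),
\]
and then choose $K$ of order $\rho^{-1/(d+1)}$ --- respecting the ceiling $K<(\rho/3)^{-1/(d+1)}$ but exceeding a fixed threshold $K_0$ with $C(K_0{+}1)^{1-1/d}$ bigger than $\min_k k\Delta f_k$ --- and bound the three contributions to $\eps_K(\beta,\rho,\vect\rho^\ideal)$. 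The leading term is handled by $\rho_{\le K}\le\sum_k k\rho_k^\ideal\le c\,m^\ideal$ and $m^\ideal\le\rho$, giving $\rho_{\le K}^{(d+2)/(d+1)}\le C'm^\ideal\rho^{1/(d+1)}$; the improved middle term $(\sum_{k>K}k\rho_k^\ideal)\log\beta$ and the term $-m_{>K}\log m_{>K}$ are each at most a polynomial in $K,\log\beta$ times $\e^{-\beta C(K+1)^{1-1/d}}$, which I would bound by $C'm^\ideal\rho^{1/(d+1)}$ separately in the two regimes: in the gas regime $m^\ideal\asymp\rho$ while $(K+1)^{1-1/d}\asymp\rho^{-(1-1/d)/(d+1)}\to\infty$, so the exponential beats the needed power of $\rho$ uniformly in $\beta\ge\overline\beta$; in the condensed regime $m^\ideal\asymp\rho_{\mathrm c}(\beta)\asymp\e^{-\beta\min_k k\Delta f_k}$, and being condensed forces $\beta\gtrsim\log(1/\rho)$, so $\e^{-\beta(C(K+1)^{1-1/d}-\min_k k\Delta f_k)}$ is a high power of $\rho$.

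To get \eqref{MinEsti} from this, I would use that $f^\ideal(\beta,\rho,\cdot)$ is, modulo affine terms, a relative entropy. Substituting $\beta k\Delta f_k=k\log z-\log\rho_k^\ideal$ into \eqref{fideal} gives, for any admissible $\vect\rho$ with $m:=\sum_k\rho_k$,
\[
f^\ideal(\beta,\rho,\vect\rho)-f^\ideal(\beta,\rho)=\frac{\log z}{\beta}\Bigl(\sum_k k\rho_k-\sum_k k\rho_k^\ideal\Bigr)+\frac m\beta\,H\Bigl(\frac{\vect\rho}{m};\frac{\vect\rho^\ideal}{m^\ideal}\Bigr)+\frac{m^\ideal}{\beta}\,\psi\Bigl(\frac{m}{m^\ideal}\Bigr),
\]
with $\psi(t):=t\log t-t+1\ge0$; the first summand is $\ge0$ too, since either $z=1$ or ($z<1$ and $\sum_k k\rho_k\le\rho=\sum_k k\rho_k^\ideal$ in the gas regime). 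Taking $\vect\rho$ a minimiser of $f(\beta,\rho,\cdot)$, Theorem~\ref{thm:bounds} gives $f^\ideal(\beta,\rho,\vect\rho)\le f(\beta,\rho,\vect\rho)=f(\beta,\rho)$, which by the already-established upper bound is $\le f^\ideal(\beta,\rho)+\frac C\beta m^\ideal\rho^{1/(d+1)}$; dropping the nonnegative first summand and dividing by $m^\ideal/\beta$ leaves $\frac{m}{m^\ideal}H(\vect\rho/m;\vect\rho^\ideal/m^\ideal)+\psi(m/m^\ideal)\le C\rho^{1/(d+1)}$. The elementary bound $\psi(t)\ge\frac14(t-1)^2$ on $[0,2]$ (in force once $\overline\rho$ is small enough that $C\overline\rho^{1/(d+1)}<\psi(2)$, which forces $m/m^\ideal\in[0,2]$) then gives $|m/m^\ideal-1|^2\le C'\rho^{1/(d+1)}$, and feeding $m/m^\ideal\ge\frac12$ back into the previous inequality gives $\frac12 H(\vect\rho/m;\vect\rho^\ideal/m^\ideal)\le C\rho^{1/(d+1)}$.

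The substantive obstacle is the cutoff estimate $\eps_K(\beta,\rho,\vect\rho^\ideal)\le C'm^\ideal\rho^{1/(d+1)}$: $K$ must simultaneously obey the ceiling $K<(\rho/3)^{-1/(d+1)}$ from Theorem~\ref{thm:bounds} and be large enough for the surface-order gain from Hypothesis~\ref{secondorder} to bite, and one has to follow $z(\beta,\rho)$ and $m^\ideal$ carefully across the gas/condensed transition so the error emerges with the prefactor $m^\ideal$ that \eqref{eq:expoa} demands, rather than merely as $\rho^{(d+2)/(d+1)}$; this is where the case split $\rho\lessgtr\rho_{\mathrm c}(\beta)$ and the sharpened form of Hypothesis~\ref{collapse} are essential. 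Granting that estimate, the passage to \eqref{MinEsti} is the short convexity computation above.
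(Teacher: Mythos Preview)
Your proposal follows essentially the same route as the paper: evaluate $f(\beta,\rho,\cdot)$ at $\vect\rho^\ideal$, apply Theorem~\ref{thm:bounds} with $K$ of order $\rho^{-1/(d+1)}$ and use Hypothesis~\ref{secondorder} for the tail, then for part~(2) decompose $f^\ideal(\beta,\rho,\vect\rho)-f^\ideal(\beta,\rho)$ as a relative entropy plus nonnegative remainder. The one organisational difference is that the paper packages the tail bound as a single lemma (Lemma~\ref{lem:idealtail2}), proving $\sum_{k>K}k\rho_k^\ideal/m^\ideal\le\e^{-\beta cK^{1-1/d}}$ uniformly via the single-term lower bound $m^\ideal\ge\rho_{k(\nu^*)}^\ideal=\e^{-\beta\nu^*+O(\log\beta)}$, which sidesteps your gas/condensed case split.
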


Here 
\begin{equation*}
	H(\vect{a};\vect{b}) = \sum_{k\in \N}\Bigl(b_k-a_k+  a_k \log \frac{a_k}{b_k} \Bigr)
\end{equation*}
is the relative entropy between two finite measures $\vect{a}$ and $\vect{b}$ on $\N$, 
and we recall Pinsker's inequality: when $\vect{a}$ and $\vect{b}$ are probability measures, 
then $\frac 12 H(\vect{a};\vect{b}) \geq  ||\vect{a}- \vect{b}||_\mathrm{var}^2$. 
The proof of Theorem~\ref{thm:exponential} is in Section~\ref{sect-proofthmexpo}.

If we do not assume that Hypotheses~\ref{collapse} and~\ref{secondorder} are true, 
our rigorous bounds hold in the temperature-density plane only in a region away from the critical line given by $\rho=\exp(-\beta \nu^*)$ with  $\nu^*\in(0,\infty)$ defined as follows. Introduce the ground-state energy,
\begin{equation}\label{Ekdef}
	E_k:= \inf_{\vect{x}\in(\R^d)^k} U_k(\vect{x}).
\end{equation}
Then $e_\infty:= \lim_{k\to\infty}E_k/k$ exists in $(-\infty,0)$, and $\nu^*:=\inf_{k\in\N}(E_k-ke_\infty)$ is positive \cite{CKMS10,jkm}. 

\begin{theorem} \label{thm:ideal}
Let $v$ satisfy Hypotheses~\ref{ass:hoelder} and~\ref{ass:maxdist}. Then, for any $\eps >0$ there are $\beta_\eps, C_\eps,C'_\eps>0$ such that for all $\beta \in[\beta_\eps,\infty)$ and $\rho\in(0,\infty)$ satisfying $- \beta^{-1}\log \rho > \nu^* +\eps$, \eqref{eq:expoa} and~\eqref{MinEsti} hold with $C$ and $C'$ replaced by $C_\eps$ and $C'_\eps$, respectively.\end{theorem}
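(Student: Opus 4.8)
The plan is to rerun the proof of Theorem~\ref{thm:exponential} from Section~\ref{sect-proofthmexpo}, locating the two spots where Hypotheses~\ref{collapse} and~\ref{secondorder} are used and replacing each by a consequence of the standing assumption $-\beta^{-1}\log\rho>\nu^*+\eps$. The lower bounds in \eqref{eq:expoa} and \eqref{MinEsti} need nothing new: $f(\beta,\rho)=\min_{\vect\rho}f(\beta,\rho,\vect\rho)\geq\min_{\vect\rho}f^\ideal(\beta,\rho,\vect\rho)=f^\ideal(\beta,\rho)$ by the left inequality of Theorem~\ref{thm:bounds}, which applies once $\beta\geq\beta_\eps$, $\rho<\overline\rho$, and those smallness requirements are implied by $\rho<\e^{-\beta(\nu^*+\eps)}$ for $\beta\geq\beta_\eps$.

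For the upper bound in \eqref{eq:expoa} I would, as in Theorem~\ref{thm:exponential}, apply Theorem~\ref{thm:bounds} to $\vect\rho=\vect\rho^\ideal(\beta,\rho)$, so that $f(\beta,\rho)\leq f(\beta,\rho,\vect\rho^\ideal)\leq f^\ideal(\beta,\rho)+\tfrac{C}{\beta}\eps_K(\beta,\rho,\vect\rho^\ideal)$, and then bound $\eps_K(\beta,\rho,\vect\rho^\ideal)$ by $C_\eps\,m^\ideal(\beta,\rho)\,\rho^{1/(d+1)}$ for a suitable $K$. From the Euler--Lagrange equations for $f^\ideal(\beta,\rho,\cdot)$ one has $\rho_k^\ideal=\exp\!\big(\mu k-\beta k(f_k^\cl(\beta)-f_\infty^\cl(\beta))\big)$ with a multiplier $\mu=\mu(\beta,\rho)\leq0$. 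Fix $k^*$ with $E_{k^*}-k^*e_\infty\leq\nu^*+\eps/4$; the fixed-$k^*$ low-temperature asymptotics $k^*(f_{k^*}^\cl(\beta)-f_\infty^\cl(\beta))\to E_{k^*}-k^*e_\infty$ (which follow from Hypothesis~\ref{ass:hoelder} and the cluster-free-energy estimates of~\cite{jkm}) give $k^*(f_{k^*}^\cl(\beta)-f_\infty^\cl(\beta))\leq\nu^*+\eps/2$ for $\beta\geq\beta_\eps$, and combining $k^*\rho_{k^*}^\ideal\leq\sum_k k\rho_k^\ideal\leq\rho<\e^{-\beta(\nu^*+\eps)}$ with this forces $\mu\leq-c_\eps\beta$ with $c_\eps=\eps/(2k^*)$. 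Two consequences: (i) since $\mu<0$, the density constraint is active and there is no infinite-cluster contribution (we are strictly in the gas phase), so $\rho-\rho_{\leq K}=\sum_{k>K}k\rho_k^\ideal$ automatically — exactly the improvement Hypothesis~\ref{collapse} would supply; (ii) together with the crude uniform bound $f_k^\cl(\beta)-f_\infty^\cl(\beta)\geq-C\log\beta/\beta$ (again from Hypothesis~\ref{ass:hoelder} and elementary bounds on $Z_k^\cl$), $\mu\leq-c_\eps\beta$ yields the exponential tail estimate $\rho_k^\ideal\leq\e^{-c_\eps\beta k/2}$ for all $k$ and $\beta\geq\beta_\eps$, a quantitative substitute for Hypothesis~\ref{secondorder}. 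One further ingredient is $m^\ideal(\beta,\rho)\geq c''\rho$ for a constant $c''$ depending only on $v$: I would get this by showing $k\mapsto k\rho_k^\ideal$ is geometrically decreasing past a $v$-dependent constant $K_0$, since for $k\geq K_0$ the ratio $\tfrac{(k+1)\rho_{k+1}^\ideal}{k\rho_k^\ideal}=\tfrac{k+1}{k}\e^{\mu}\e^{-\beta\Delta_k}$ with $\Delta_k:=(k+1)(f_{k+1}^\cl-f_\infty^\cl)-k(f_k^\cl-f_\infty^\cl)$ satisfies $|\Delta_k|\leq c_\eps/4$ (from the large-$k$ behaviour of $f_k^\cl-f_\infty^\cl$), hence is $\leq2\e^{-c_\eps\beta/2}<\tfrac12$; then the maximiser of $k\rho_k^\ideal$ lies in $\{1,\dots,K_0\}$ and $\rho=\sum_k k\rho_k^\ideal\leq 3K_0\,m^\ideal$.

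Given these facts I take $K=K(\rho)$ just below the admissible threshold $(\rho/3)^{-1/(d+1)}$ and estimate the three terms of $\eps_K$: $\rho_{\leq K}^{(d+2)/(d+1)}\leq\rho\cdot\rho^{1/(d+1)}\leq 3K_0\,m^\ideal\rho^{1/(d+1)}$; and $\sum_{k>K}k\rho_k^\ideal\leq 2(K+1)\e^{-c_\eps\beta(K+1)/2}$, which with $K+1\sim\rho^{-1/(d+1)}$ beats any power of $\rho$ (because a power of $\rho^{-1/(d+1)}$ dominates $\log(1/\rho)$, for all admissible $\rho$ once $\beta\geq\beta_\eps$), so in particular it is $\leq\rho^{(d+2)/(d+1)}/\log\beta\leq 3K_0\,m^\ideal\rho^{1/(d+1)}/\log\beta$, controlling the $(\rho-\rho_{\leq K})\log\beta$ term; the $-m_{>K}\log m_{>K}$ term is handled the same way, using $m_{>K}\leq\sum_{k>K}k\rho_k^\ideal$ and $|\log m_{>K}|=O(\beta K)$. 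Summing gives the upper bound in \eqref{eq:expoa} with a constant $C_\eps$. Finally \eqref{MinEsti} follows from \eqref{eq:expoa} exactly as in Theorem~\ref{thm:exponential}: a quantitative strict-convexity (stability) bound for $f^\ideal(\beta,\rho,\cdot)$ at its unique minimiser — a statement about $f^\ideal$ alone, unaffected by dropping Hypotheses~\ref{collapse}--\ref{secondorder} — turns the $O(\rho^{1/(d+1)})$ free-energy gap into the stated bounds on $m/m^\ideal$ and on the relative entropy.

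The hard part is the estimate $m^\ideal(\beta,\rho)\geq c''\rho$, i.e.\ showing that the dominant cluster size in the gas phase stays bounded. This comes down to the behaviour of $f_k^\cl(\beta)-f_\infty^\cl(\beta)$ in $k$: one needs $|\Delta_k|$ small for $k$ past a constant threshold, uniformly for large $\beta$. Under Hypothesis~\ref{secondorder} this is immediate (one even gets $\rho_k^\ideal\leq\e^{-c\beta k^{1-1/d}}$ directly); here it must be distilled from Hypothesis~\ref{ass:maxdist}, which bounds $E_k-ke_\infty$ from above, together with the low-temperature cluster-free-energy estimates of~\cite{jkm}, and quantifying their $k$- and $\beta$-uniformity precisely enough to pin down $K_0$ is the step I expect to require the most care. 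Everything else is a rerun of the proof of Theorem~\ref{thm:exponential} with the constants $C,C'$ replaced by $\eps$-dependent constants.
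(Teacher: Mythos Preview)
Your overall strategy is correct and matches the paper's: rerun the proof of Theorem~\ref{thm:exponential}, replacing Hypothesis~\ref{collapse} by the observation that under $-\beta^{-1}\log\rho>\nu^*+\eps$ one has $\rho<\rho_\sat^\ideal(\beta)$ (so $\sum_k k\rho_k^\ideal=\rho$, whence $\rho-\rho_{\leq K}=\sum_{k>K}k\rho_k^\ideal$ automatically), and replacing Hypothesis~\ref{secondorder} by a tail estimate on $\vect\rho^\ideal$. The paper packages the two needed estimates, $\rho/m^\ideal\leq C_\eps$ and $\sum_{k>K}k\rho_k^\ideal/m^\ideal\leq\e^{-\beta K\delta_\eps}$, into Lemma~\ref{lem:idealtail} and simply cites it in place of Lemma~\ref{lem:idealtail2}.

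There is, however, a real gap in your route to $m^\ideal\geq c''\rho$. Your bound $\mu\leq-c_\eps\beta$ is $\rho$-independent and yields only the \emph{absolute} estimate $\rho_k^\ideal\leq\e^{-c_\eps\beta k/2}$; since $\rho$ may be arbitrarily small this says nothing about $\rho/m^\ideal$. Your fallback $\Delta_k$ argument needs $\Delta_k=(k+1)f_{k+1}^\cl-kf_k^\cl-f_\infty^\cl\geq-c_\eps/4$ for all $k\geq K_0$, which via Lemma~\ref{lem:fk-lowtemp} amounts (up to $O(k\log\beta/\beta)$) to a lower bound on $E_{k+1}-E_k-e_\infty$. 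But only the Ces\`aro statement $E_k/k\to e_\infty$ is available, not control of the increments $E_{k+1}-E_k$, and Hypothesis~\ref{ass:maxdist} concerns the \emph{shape} of minimisers, not the values $E_k$; it does not supply this. The paper's route is simpler and sidesteps the issue entirely: retain the $\rho$-dependence in the multiplier bound, $\mu^\ideal\leq f_{k_0}^\cl+(\beta k_0)^{-1}\log\rho$. Combined with $f_k^\cl\geq E_k/k-C/\beta$ and the existence of $k_\eps,\Delta_\eps>0$ with $(E_k-\nu^*-\eps)/k\geq\mu_\eps+\Delta_\eps$ for $k\geq k_\eps$, where $\mu_\eps:=\inf_j(E_j-\nu^*-\eps)/j<e_\infty$ (this uses only $E_k/k\to e_\infty$), one obtains the \emph{relative} bound $\rho_k^\ideal/\rho\leq\e^{-\beta k\Delta_\eps/2}$ for $k\geq k_\eps$, uniformly in $\rho$. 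Then $\rho\leq K_1m^\ideal+\sum_{k>K_1}k\rho_k^\ideal\leq K_1m^\ideal+\rho/2$ for a suitable fixed $K_1$, giving $\rho\leq 2K_1m^\ideal$. Only Hypothesis~\ref{ass:hoelder} is used for this step; Hypothesis~\ref{ass:maxdist} enters the proof solely through Theorem~\ref{thm:bounds}.
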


The proof of Theorem~\ref{thm:ideal} is in Section~\ref{sect-proofthmideal}.

\subsection{Discussion}\label{sect-discussion}

Let us explain in more detail the significance of Theorems~\ref{thm:exponential} 
and~\ref{thm:ideal}, and in which way they improve results of~\cite{jkm}. 

We start by recalling the properties of the idealised problem; see also \cite[Sect.~4]{BCP86}.  As mentioned above, 
for all $\beta$ and $\rho$, $f^\ideal(\beta,\rho,\cdot)$ has a unique minimiser 
$(\rho_k^\ideal(\beta,\rho))_{k\in \N}$, which can be characterised as follows. Let 
\begin{equation}\label{rhosatdef}
	\rho_\sat^\ideal(\beta):= \sum_{k\in \N} k\, \e^{\beta k [f_\infty^\cl(\beta) - 
			f_k^\cl(\beta)]} \in(0,\infty]
\end{equation}
be the saturation density of the ideal mixture. 
If Hypothesis~\ref{secondorder} holds (actually also under much weaker bounds than the one in \eqref{fkbound}), at low temperature, $\rho_\sat^\ideal(\beta)$ is finite. 
In general, however, it can be infinite. 
For $\rho<\rho_\sat^\ideal(\beta)$, let $\mu^\ideal(\beta,\rho)\in(-\infty,f_\infty^\cl(\beta))$ be the unique solution of 
\begin{equation} \label{eq:musol}
	 \sum_{k=1}^\infty k \,\e^{\beta k [\mu^\ideal(\beta,\rho)- f_k^\cl(\beta)]} = \rho,
\end{equation}
and for $\rho \geq \rho_\sat^\ideal(\beta)$, let $\mu^\ideal(\beta,\rho):= f_\infty^\cl(\beta)$.
Then, the minimiser $(\rho_k^\ideal(\beta,\rho))_k$ is given by 
\begin{equation} \label{eq:rhok}
	\rho_k^\ideal(\beta,\rho) = \e^{\beta k [\mu^\ideal(\beta,\rho)- f_k^\cl(\beta)]},  
\end{equation}
and the ideal free energy from~\eqref{fidealdef} is given by 
\begin{equation} \label{eq:fideal}
	f^\ideal(\beta,\rho) = \rho \mu^\ideal(\beta,\rho) - \frac{1}{\beta}  m^\ideal(\beta,\rho). 
\end{equation}
Moreover, $\rho \mapsto f^\ideal(\beta,\rho)$ is analytic and strictly convex in $(0,\rho_\sat^\ideal(\beta))$, and linear with slope $f_\infty^\cl(\beta)$ in $[\rho_\sat^\ideal(\beta),\infty)$. In particular, the ideal mixture undergoes a phase transition as the density is varied if and only if the saturation density of the ideal mixture is finite. The transition is from a gas phase where all particles are in finite-size clusters, to a condensed phase where 
a positive fraction goes into unboundedly  large clusters: for all $\rho<\rho_\sat^\ideal(\beta,\rho)$, 
we have $\sum_{k=1}^\infty k \rho_k^\ideal(\beta,\rho) = \rho$, while for 
$\rho>\rho_\sat^\ideal(\beta)$, $\sum_{k=1}^\infty k \rho_k^\ideal(\beta) = \rho_\sat^\ideal(\beta) <\rho$. 

Armed with this knowledge, we can compare our results with those of~\cite{jkm}. In \cite{jkm}, we approximated the rate function $f(\beta,\rho,\cdot)$, more precisely the function $\vect{q}=(q_k)_ {k\in\N}\mapsto \frac 1\rho f(\beta,\rho,(kq_k)_ {k\in\N}/\rho)$, with 
\begin{equation} \label{eq:ckms}
g_\nu(\vect{q}) = \Big(1- \sum_{k=1}^\infty q_k\Big) e_\infty + \sum_{k=1}^\infty q_k
		 \frac{E_k - \nu}{k}, \qquad \nu:= - \frac1\beta{\log \rho}. 
\end{equation} 
This function is easier to formulate, but involves more approximations, and has some rather unphysical properties. This approximation was proved in the so-called {\it Saha regime}, where large $\beta$ and small $\rho$ are coupled with each other via the equation $\rho=\e^{-\beta\nu}$ for some parameter $\nu\in(0,\infty)$, and the limiting rate function $g_\nu$ turned out to be piecewise linear with at least one kink, but also possibly more. Each kink represents a phase transition, and the minimiser in the kinks is not unique. This is in strong contrast with the approximate rate function $f^\ideal(\beta,\rho,\cdot)$ studied in this paper, which possesses only one minimiser and only (at most) one phase transition. As we make explicit in the next paragraph, $f^\ideal(\beta,\rho,\cdot)$ can itself be approximated by $g_\nu$, in particular by neglecting an entropy term. It is the smoothing effect of this term that gets lost in that approximation, and possibly a lot of new kinks appear in this way. We know that these additional kinks correspond to cross-overs inside the gas phase, but not to sharp phase transitions (see \cite{J} for a discussion of this). Hence, the full ideal mixture captures the behaviour of the physical system much better than the function studied in \cite{jkm}.

We note that both $f^\ideal(\beta,\rho,\cdot)$ and $g_\nu$ considered in~\cite{CKMS10,jkm} appear as exact large deviations rate functions for simple random partitions models. We consider vectors $(N_k)_{1\leq k \leq N} \in \N_0^N$ with $\sum_{k=1}^N k N_k = N$ as integer partitions, and look at the (not normalised) measures on partitions given by
\begin{equation}\label{twomodels}
\begin{aligned}
\mu^\ideal_{\beta,N,\Lambda}( \{(N_1,\ldots,N_N)\}) & := \prod_{k=1}^N \frac{(|\Lambda| Z_k^\cl(\beta))^{N_k}}{N_k!}
		 = \binom{M}{N_1,\ldots,N_N} 
				\times \frac{|\Lambda|^M}{M!} \prod_{k=1}^N ( Z_k^\cl(\beta))^{N_k},\\ 
\mu^{\sf{CKMS}}_{\beta,N,\Lambda} (\{N_1,\ldots,N_N\}) &:= 
		\frac{|\Lambda|^M}{M!} \prod_{k=1}^N (  \e^{-\beta E_k}) ^{N_k}, 
\end{aligned}
\end{equation}
where $M:= \sum_{k=1}^N N_k$. In the thermodynamic limit $N,|\Lambda|\to \infty$ such that $N/|\Lambda|\to \rho$, under $\mu^\ideal_{\beta,N,\Lambda}$, the vector $(N_k/|\Lambda|)_{k\in \N}$ satisfies a large deviations principle with speed $\beta |\Lambda|$ and rate function $f^\ideal(\beta,\rho,\cdot)$. On the  other hand, under $\mu^{\sf{CKMS}}_{\beta,N,\Lambda}$, the vector $( k N_k/N)_{k\in\N}$ satisfies a large deviations principle with speed $\beta N$ and rate function $\vect{q}\mapsto g_\nu(\vect{q})- \frac{1}{\beta} \sum_{k\in\N} \frac{q_k}{k}$, which differs from the approximate functional $g_\nu$ from~\cite{CKMS10,jkm} only by a vanishing term. Hence, from \eqref{twomodels} we see that $g_\nu(\cdot)$ arises from $f^ \ideal(\beta,\rho,\cdot)$ by two simplifications: $Z_k^\cl(\beta) \approx \exp(-\beta E_k)$ and the omission of the multinomial coefficient, that is, the cluster free energy is approximated by the ground state energy, and the mixing entropy is neglected.

The second main difference with~\cite{jkm} is that our error bounds are much better. 
In~\cite{jkm},  the free energy per particle $f(\beta,\rho)/\rho$ is approximated up to errors 
of the order $(\log \beta)/\beta$. In contrast, when  
$\rho = \e^{-\beta \nu}$ for fixed $\nu>0$, as $\beta \to \infty$, the error in \eqref{eq:expoa} vanishes exponentially fast in $\beta$. Moreover, \eqref{eq:expoa} may be written as 
\begin{equation*}
	f(\beta,\rho) = \rho\mu^\ideal(\beta,\rho) -\frac{1}{\beta} m^\ideal(\beta,\rho) 
		\Bigl( 1+ O(\rho^{1/(d+1)}) \Bigr). 
\end{equation*}
This is interesting because for $\rho>\rho_\sat^\ideal(\beta)$, we have $m^\ideal(\beta,\rho) \leq \rho_\sat^\ideal(\beta)$. Thus  the free energy equals the ideal free energy plus an error which is small compared to the smallest of the two terms in~\eqref{eq:fideal}, $\frac 1\beta m^\ideal$. 

For completeness and for the reader's convenience, in Section~\ref{sect-Saha}, we provide approximations of the idealised mixture model in terms of $g_\nu$ in the Saha regime with exponentially small errors.

\section{Proof of Theorem~\ref{thm:bounds}}\label{sect-proofthmbounds}

\noindent In this section, we prove Theorem~\ref{thm:bounds}. We will use a convexity argument and split an arbitrary sequence $\vect{\rho}$ into its components on the first $K$ entries and the ones on the remainder. Hence, we consider these two parts separately.

\subsection{Case 1: all clusters have size $\boldsymbol{\leq K}$} 

Here we consider $\vect{\rho}$ satisfying $\rho_k=0$ for $k>K$ with some $K\in\N$. Recall from Assumption (V) that $b$ is the interaction range and $R\in(b,\infty)$ determines the notion of connectedness.

\begin{lemma} \label{lem:fkcomp}
	Let $\beta>0$, $k \in \N$. Set $C:= \sup_{|x|\leq 2/3} |x^{-1}\log (1-x)|$.
	Then 
	\begin{itemize}
		\item[(i)] For all $A>0$, $ f_k^\cl(\beta) \leq f_k^{\cl,A}(\beta)$. 
		\item[(ii)] For all  $A>3k R$, 
		\begin{equation*}
			 f_k^{\cl,A}(\beta) \leq  f_k^{\cl}(\beta) + \frac{C d R}{\beta A}. 
		\end{equation*}
	\end{itemize}
\end{lemma}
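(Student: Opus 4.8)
The plan is to derive both inequalities from a single \emph{exact} identity for $Z_k^{\cl,A}(\beta)$ that isolates the dependence on $A$, obtained via translation invariance.

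First I would use that $U_k$ and the relation ``$\{x_1,\dots,x_k\}$ connected'' are translation invariant. In \eqref{Zcladef} I perform the volume-preserving change of variables $x_i=x_1+y_i$ ($i=2,\dots,k$), with $y_1:=0$: then $U_k(x_1,\dots,x_k)=U_k(0,y_2,\dots,y_k)$, the connectedness event is unchanged, and the domain constraint $\vect{x}\in([0,A]^d)^k$ turns into $x_1\in\bigcap_{i=1}^k([0,A]^d-y_i)$. Doing the $x_1$-integral first (Fubini; all integrands $\ge0$) and noting that, for fixed $y_2,\dots,y_k$, the set $\bigcap_{i=1}^k([0,A]^d-y_i)$ has volume $\prod_{j=1}^d(A-s_j)^+$ --- where $s_j:=\max_{1\le i\le k}y_i^{(j)}-\min_{1\le i\le k}y_i^{(j)}$ is the coordinate-$j$ spread of $\{0,y_2,\dots,y_k\}$ and $u^+:=\max(u,0)$ --- I obtain
\begin{equation*}
Z_k^{\cl,A}(\beta)=\frac{1}{k!\,A^d}\int_{(\R^d)^{k-1}}\e^{-\beta U_k(0,y_2,\dots,y_k)}\,\1\bigl\{\{0,y_2,\dots,y_k\}\text{ connected}\bigr\}\prod_{j=1}^d(A-s_j)^+\,\dd y_2\cdots\dd y_k.
\end{equation*}

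Part (i) is then immediate: $0\le(A-s_j)^+\le A$ gives $\prod_j(A-s_j)^+\le A^d$, so comparison with \eqref{Zcldef} yields $Z_k^{\cl,A}(\beta)\le Z_k^\cl(\beta)$, i.e.\ $f_k^\cl(\beta)\le f_k^{\cl,A}(\beta)$. For part (ii), on the connected event any two of the points $0,y_2,\dots,y_k$ are joined by a path of at most $k-1$ edges of length $\le R$, so the configuration has diameter $\le(k-1)R$ and hence $s_j\le(k-1)R$ for every $j$; since $A>3kR$ makes $A-(k-1)R>0$, the product is $\ge(A-(k-1)R)^d$ on that event, so $Z_k^{\cl,A}(\beta)\ge\bigl(1-\tfrac{(k-1)R}{A}\bigr)^dZ_k^\cl(\beta)$. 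Applying $-\tfrac1{\beta k}\log(\cdot)$ and then the elementary bound $-\log(1-x)\le Cx$ for $0\le x\le\tfrac23$ (immediate from the definition of $C$) to $x=\tfrac{(k-1)R}{A}<\tfrac13$ gives $f_k^{\cl,A}(\beta)-f_k^\cl(\beta)\le\tfrac{d}{\beta k}\,C\,\tfrac{(k-1)R}{A}\le\tfrac{CdR}{\beta A}$, as claimed.

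The one point requiring care is computing the intersection volume \emph{exactly} as $\prod_j(A-s_j)^+$; the cruder alternative of forcing $x_1$ into a sub-cube of side $A-2(k-1)R$ also closes the argument but loses a factor $2$ and would produce $2CdR/(\beta A)$ instead of the stated constant. The remaining ingredients --- the Jacobian of the change of variables is $1$, $U_k$ and connectedness are translation invariant (clear under Assumption~(V)), and a connected $k$-point configuration with edge length $\le R$ has diameter at most $(k-1)R$ --- are routine.
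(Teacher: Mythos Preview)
Your proof is correct and uses the same translation-invariance idea as the paper, which simply restricts $x_1$ to the sub-cube $[kR,A-kR]^d$ (so that connectedness forces $x_2,\dots,x_k\in[0,A]^d$ automatically) and obtains $Z_k^{\cl,A}(\beta)\ge(1-2kR/A)^d Z_k^\cl(\beta)$. Your exact computation of the intersection volume $\prod_j(A-s_j)^+$ is a genuine sharpening: it delivers the stated constant $CdR/(\beta A)$, whereas the paper's cruder sub-cube bound actually produces $2CdR/(\beta A)$, exactly as you anticipated in your final remark.
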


\begin{proof}
	(i) For all $\beta,k,A$, we have  $Z_k^{\cl,A}(\beta) \leq Z_k^\cl(\beta)$, which implies that
	$f_k^{\cl,A}(\beta) \geq f_k^\cl(\beta)$.

(i) Let $k \in \N$ and $a >2kR$. Let $x_1 \in [0,a]^d$ have distance $\geq k R$ to the 
	boundary of the cube. Then, writing $\vect{x}=(x_1,\dots,x_k)$,
	$$
\begin{aligned}
		\frac{1}{k!} \int_{[0,a^d]^{k-1}} \e^{- \beta U_k(\vect{x})}\1\{\vect{x}\ \text{connected}\}\, \dd x_2 \cdots \dd x_k 
		&= \frac{1}{k!} \int_{(\R^d)^{k-1}} \e^{- \beta U_k(\vect{x})}\1\{\vect{x}
		\ \text{connected}\}\, \dd x_2 \cdots \dd x_k \\
&= Z_k^\cl(\beta).
	\end{aligned}
$$
Thus, integrating over $x_1$ and recalling the definition of $Z_k^{\cl,A}(\beta)$ in \eqref{Zcladef},
	\begin{equation*}
		Z_k^{\cl,A}(\beta) \geq \frac{ (A- 2 k R)^d}{A^d} Z_k^\cl(\beta). 
	\end{equation*}
	Therefore, when $A>3 kR$, we take $-\frac 1{\beta k}\log$ and deduce
	$$f_k^{\cl,A}(\beta) \leq f_k^{\cl}(\beta) - \frac{1}{\beta k} \log\big[(1- \smfrac{2k R}A)^d\big]  \leq f_k^{\cl}(\beta) + \frac{C d R}{\beta A}. $$
\end{proof}

\begin{lemma} \label{lem:ubfinite}
	Fix $\beta,\rho>0$. Let $K \in \N$ and $A>0$ such that 
	 $(A+R)^d <\rho^{-1}$. Then for all 
	 $\vect{\rho}$ such that $\sum_{k=1}^K k \rho_k = \rho$, 
	\begin{equation} \label{eq:ubfinite}
	\begin{aligned}
		f(\beta,\rho,\vect{\rho}) & \leq \sum_{k=1}^K k \rho_k f_k^{\cl,A}(\beta) 
			+ \frac{1}{\beta} \sum_{k=1}^K \rho_k (\log \rho_k - 1) \\
			& \quad + \frac{1}{\beta} \left( \sum_{k=1}^K \rho_k \right) 
			\left( - \log \Bigl(1- (A+R)^d \sum_{k=1}^K \rho_k\Bigr) 
		+ \log \Bigl( 1+ \smfrac{R}{A} \Bigr)^d \right). \qedhere
	\end{aligned}
	\end{equation}
\end{lemma}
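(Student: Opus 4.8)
\emph{Strategy.} Fix nonnegative integers $N_1,\dots,N_K$ with $N_k/|\Lambda|\to\rho_k$ for $k=1,\dots,K$ and $\sum_k kN_k=N$ (a bounded rounding error, immaterial in the limit), put $M:=\sum_k N_k$, and note that $M/|\Lambda|\to m:=\sum_k\rho_k$ and $(A+R)^d m\le(A+R)^d\rho<1$, so $M<T$ once $\Lambda$ is large. By the large-deviation principle~\eqref{LDP} it is enough to bound from below the constrained partition function $Q_\Lambda:=\frac1{N!}\int_{\Lambda^N}\e^{-\beta U_N(\vect x)}\1\{\vect\rho_\Lambda(\vect x)=(N_k/|\Lambda|)_k\}\,\dd\vect x$: since $(N_k/|\Lambda|)_k\to\vect\rho$, \eqref{LDP} gives $f(\beta,\rho,\vect\rho)\le\liminf_\Lambda\bigl(-\tfrac1{\beta|\Lambda|}\log Q_\Lambda\bigr)$. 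To estimate $Q_\Lambda$ I would use a tiling construction: fit $T:=\lfloor L/(A+R)\rfloor^d$ pairwise disjoint cells of side $A+R$ into $\Lambda$, and in each cell mark the corner sub-cube of side $A$. The role of the margin $R$ is that the marked sub-cubes of distinct cells are at mutual distance $\ge R>b=\sup\supp v$, so particles lying in sub-cubes of two distinct cells neither interact nor are joined by a cluster edge.

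\emph{Lower bound on $Q_\Lambda$.} I would restrict the integral defining $Q_\Lambda$ to the configurations obtained by assigning the $N$ labelled particles to the cells so that exactly $N_k$ cells receive $k$ particles, with the particles of each occupied cell lying in its marked sub-cube and forming a connected set. By the geometric remark every such configuration has cluster-size vector $(N_k/|\Lambda|)_k$, the energy factorises over the occupied cells, and different particle-to-cell assignments contribute disjoint subsets of $\Lambda^N$. There are $\tfrac{T!}{(T-M)!\,\prod_k N_k!}\cdot\tfrac{N!}{\prod_k (k!)^{N_k}}$ admissible assignments, and by translation invariance the per-cell integral over a side-$A$ sub-cube hosting $k$ particles equals $\int_{([0,A]^d)^k}\e^{-\beta U_k}\1\{\text{connected}\}\,\dd x_1\cdots\dd x_k=k!\,A^d Z_k^{\cl,A}(\beta)$. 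Summing, the factors $N!$ and $\prod_k (k!)^{N_k}$ cancel and one is left with
\begin{equation*}
Q_\Lambda\ \ge\ \frac{T!}{(T-M)!\,\prod_k N_k!}\;A^{dM}\prod_k\bigl(Z_k^{\cl,A}(\beta)\bigr)^{N_k}.
\end{equation*}

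\emph{Passing to the limit and conclusion.} I would then take $\tfrac1{|\Lambda|}\log$, use Stirling's formula and $\log Z_k^{\cl,A}(\beta)=-\beta k f_k^{\cl,A}(\beta)$, and write $t:=(A+R)^{-d}$ (so $T/|\Lambda|\to t$ and the divergent $\log|\Lambda|$ contributions cancel, because $t-(t-m)-m=0$). This gives
\begin{equation*}
f(\beta,\rho,\vect\rho)\ \le\ \sum_{k=1}^K k\rho_k f_k^{\cl,A}(\beta)+\frac1\beta\Bigl(\sum_{k=1}^K\rho_k\log\rho_k+m\log\bigl(1+\smfrac{R}{A}\bigr)^{d}+(t-m)\log\bigl(1-(A+R)^d m\bigr)\Bigr).
\end{equation*}
Finally, with $u:=(A+R)^d m\in[0,1)$ the elementary bound $\log(1-u)\le-u$ gives $t\log(1-u)\le-tu=-m$, hence $(t-m)\log(1-u)\le-m-m\log(1-u)$; substituting and using $\sum_k\rho_k=m$, the right-hand side above is at most the right-hand side of~\eqref{eq:ubfinite}.

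\emph{Main obstacle.} The computation is mostly bookkeeping; the delicate points are getting the combinatorial prefactor exactly right so that $N!$ and the internal symmetry factors $\prod_k (k!)^{N_k}$ cancel cleanly, and the (routine but not entirely trivial) deduction of the pointwise inequality $f(\beta,\rho,\vect\rho)\le\liminf_\Lambda(-\tfrac1{\beta|\Lambda|}\log Q_\Lambda)$ from~\eqref{LDP} for a sequence of integer cluster-size vectors approaching $\vect\rho$, together with the accompanying rounding of the $\rho_k$. The geometric decoupling of distinct cells and the final one-line estimate are straightforward.
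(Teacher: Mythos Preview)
Your argument is correct and follows essentially the same route as the paper: the same tiling of $\Lambda$ into side-$A$ sub-cubes separated by margins of width $R$, the same combinatorial lower bound $Q_\Lambda\ge \dfrac{T!}{(T-M)!\,\prod_k N_k!}\,(A^d)^M\prod_k (Z_k^{\cl,A})^{N_k}$, and the same appeal to the LDP/\cite[Proposition~3.2]{jkm} for the passage from $Q_\Lambda$ to $f(\beta,\rho,\vect\rho)$. The only cosmetic difference is in the treatment of the falling factorial: the paper bounds $D(D-1)\cdots(D-M+1)\ge (D-M)^M$ and reads off the term $m\log\bigl(1-(A+R)^d m\bigr)$ directly in the limit, whereas you apply Stirling to $T!/(T-M)!$, obtain the intermediate $(t-m)\log\bigl(1-(A+R)^d m\bigr)$, and then use $\log(1-u)\le -u$ to reach the same expression; both routes give exactly the right-hand side of~\eqref{eq:ubfinite}.
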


\begin{proof}
	Let $\Lambda = [0,L]^d$. For $2 \leq k \leq K$, set $N_k:= \lfloor |\Lambda|\rho_k \rfloor$. 
	Set $N_1:= N - \sum_{k=2}^K k N_k$, and $M:= \sum_{k=1}^K N_k$. 	Divide $\Lambda$ into cubes of side-length $A$, at mutual distance $R$. We call these cubes ``cells''.
	The number $D$ of cubes that can be placed in this way is $D \geq \lfloor L / (A+R)\rfloor^d$. 
	Let 
	\begin{equation*}
	Z_{\Lambda}(\beta,N,N_1,\ldots,N_K):= \frac{1}{N!} \int_{\Lambda^N} \e^{-\beta U_N(\vect{x})} \1\big\{ \forall k\in \{1,\ldots,K\}:\, N_k(\vect{x}) = N_k\big\}\, \dd \vect{x}. 
	\end{equation*}
	We can lower bound this constrained partition function by integrating only over 
	configurations such that: (a) there is at most one cluster per cell, and (b) 
	each cluster is contained in one of the cells. The number of partitions of the particle label set $\{1,\ldots,N\}$ 
	into $N_1$ sets of size $1$, $N_2$ sets of size $2$, etc., is equal to 
	$N! /( \prod_{k=1}^K k! N_k!)$. For a given set partition, the number of ways 
	to assign distinct cells to the sets is $D (D-1) \cdots (D-M+1)$. Thus we find 
	\begin{equation*}
		Z_{\Lambda}(\beta,N,N_1,\ldots,N_K) \geq D (D-1) \cdots (D-M+1) \prod_{k=1}^K 
			\frac{\bigl( A^d Z_k^{\cl,A}(\beta) \bigr)^{N_k}}{N_k!}. 
	\end{equation*}
	We observe that 
	\begin{equation*}
		\begin{aligned} 
			D (D-1) \cdots (D-M+1) (A^d)^M 
			& \geq \Bigl( \bigl( \smfrac{L}{A+R} - 1\bigr)^d - M \Bigr)^M (A^d)^M \\
			& = |\Lambda|^M \Bigl( \frac{A}{A+R} \Bigr)^{dM} 
				\Bigl( \bigl( 1- \smfrac{A+R}{L} \bigr)^d - \frac{M (A+R)^d}{|\Lambda|} \Bigr) ^M. 
		\end{aligned}
	\end{equation*}
	It follows that in the limit $N,L \to \infty$, $N/L^d \to \rho$, 
	$- \liminf \frac{1}{\beta |\Lambda|} \log Z_\Lambda(\beta,N,N_1,\ldots,N_K)$
	is not larger than the right-hand side of \eqref{eq:ubfinite}. 
	The proof of the lemma is then concluded as in the proof of~\cite[Proposition 3.2]{jkm}. 
\end{proof}

\begin{prop}\label{prop:case1}
	Fix $\beta >0$, 
	$\rho\leq (2^{d+1}/3)^{1+1/d}$, and $K\in \N$ with $K< \rho^{-1/(d+1)}$. 
	Then, for suitable $C'>0$ and all $\vect{\rho}$ with $\sum_{k=1}^K k \rho_k = \rho$ and $\rho_k = 0$ for 
	$k\geq K+1$, 
	\begin{equation*}
		f(\beta,\rho,\vect{\rho}) \leq f^\ideal(\beta,\rho,\vect{\rho}) + 
   			C' \frac{\rho}{\beta} \rho^{1/(d+1)}. 
	\end{equation*}
\end{prop}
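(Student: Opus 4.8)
The plan is to feed Lemma~\ref{lem:ubfinite} into the definition of $f^\ideal$ and then optimise over the cell size $A$. Since $\sum_{k=1}^K k\rho_k=\rho$ and $\rho_k=0$ for $k\ge K+1$, the infinite-cluster term $\bigl(\rho-\sum_k k\rho_k\bigr)f_\infty^\cl(\beta)$ in \eqref{fideal} drops out, so that
\[
  f^\ideal(\beta,\rho,\vect{\rho})=\sum_{k=1}^K k\rho_k f_k^\cl(\beta)+\frac1\beta\sum_{k=1}^K\rho_k(\log\rho_k-1).
\]
Subtracting this from the right-hand side of \eqref{eq:ubfinite}, the mixing-entropy terms cancel, and for every $A>0$ satisfying $(A+R)^d<\rho^{-1}$ one is left with
\[
  f(\beta,\rho,\vect{\rho})-f^\ideal(\beta,\rho,\vect{\rho})\le\sum_{k=1}^K k\rho_k\bigl(f_k^{\cl,A}(\beta)-f_k^\cl(\beta)\bigr)+\frac{m_{\le K}}{\beta}\Bigl(-\log\bigl(1-(A+R)^d m_{\le K}\bigr)+d\log\bigl(1+\tfrac RA\bigr)\Bigr),
\]
where $m_{\le K}:=\sum_{k=1}^K\rho_k\le\sum_{k=1}^K k\rho_k=\rho$.

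To control the first sum I would require $A>3KR$, so that Lemma~\ref{lem:fkcomp}(ii) applies for every $k\le K$ and gives $f_k^{\cl,A}(\beta)-f_k^\cl(\beta)\le \tfrac{CdR}{\beta A}$; the sum is then bounded by $\tfrac{CdR}{\beta A}\sum_{k\le K}k\rho_k=\tfrac{CdR}{\beta A}\rho$. For the second term I would use the elementary bounds $-\log(1-x)\le\const\cdot x$ for $x$ bounded away from $1$ and $\log(1+t)\le t$, together with $m_{\le K}\le\rho$, to estimate it by a constant times $\tfrac1\beta\bigl((A+R)^d\rho^2+\tfrac{dR}{A}\rho\bigr)$. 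Altogether, for every admissible $A$,
\[
  f(\beta,\rho,\vect{\rho})-f^\ideal(\beta,\rho,\vect{\rho})\le\frac{\const}{\beta}\Bigl(\frac{\rho}{A}+(A+R)^d\rho^2\Bigr).
\]

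The final step is the choice of $A$. The two contributions scale oppositely in $A$, and taking $A$ equal to a fixed multiple of $\rho^{-1/(d+1)}$ makes both $\rho/A$ and $(A+R)^d\rho^2$ of order $\rho^{(d+2)/(d+1)}=\rho\cdot\rho^{1/(d+1)}$, which is exactly the asserted error. What remains is to verify admissibility of this $A$: the condition $A>3KR$ needed in Lemma~\ref{lem:fkcomp}(ii) is precisely where the hypothesis $K<\rho^{-1/(d+1)}$ enters (choose the multiplicative constant in $A$ to be at least $3R$), while the hypothesis $\rho\le(2^{d+1}/3)^{1+1/d}$ guarantees that $(A+R)^d\rho$ stays bounded away from $1$, hence both that $(A+R)^d<\rho^{-1}$ (so Lemma~\ref{lem:ubfinite} applies) and that the logarithmic estimate above is valid. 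The only genuine effort is this bookkeeping — pinning down the constant in $A$ and the threshold for $\rho$ so that the three inequalities hold at once, after which $C'$ is read off — and it is the main obstacle, there being no analytic difficulty. Since all estimates are uniform in $\beta>0$, the bound holds for every fixed $\beta>0$ as claimed.
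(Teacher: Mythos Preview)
Your proposal is correct and follows essentially the same route as the paper: combine Lemma~\ref{lem:ubfinite} with Lemma~\ref{lem:fkcomp}(ii), bound $m_{\le K}\le\rho$, and then choose $A$ proportional to $\rho^{-1/(d+1)}$ so that both error terms become $O(\rho^{(d+2)/(d+1)}/\beta)$, using the hypotheses on $K$ and $\rho$ to verify admissibility of this $A$. The paper's proof is the same argument with the same choice of~$A$ (up to the constant prefactor) and the same bookkeeping of the two constraints $A>3KR$ and $(A+R)^d m\le 2/3$.
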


\begin{proof}
	Let $C>0$ be as in Lemma~\ref{lem:fkcomp} and set 
	$m:= \sum_{k=1}^K \rho_k$. Then by Lemmas~\ref{lem:fkcomp}
	and~\ref{lem:ubfinite}, 
	\begin{equation*}
		f(\beta,\rho,\vect{\rho}) - f^\ideal(\beta,\rho,\vect{\rho}) 
		\leq \frac{C d R}{\beta A}\rho + \frac{C m}{\beta} \Bigl( (A+R)^d m + \frac{d R}{A} \Bigr),  
	\end{equation*}
	provided $A> 3 KR$ and $(A+R)^d  m \leq 2/3$. 
	Set $A = \rho^{-1/(d+1)}/(3R)$. Then $A>3KR$ because by assumption $K<\rho^{-1/(d+1)}$, and $(A+R)^d m \leq 2^d A^d \rho = 2^d \rho^{d/(d+1)} \leq 2/3$ 
	because we have assumed $\rho\leq (2^{d+1}/3)^{1+1/d}$. Thus we have 
	\begin{equation*}
	f(\beta,\rho,\vect{\rho}) - f^\ideal(\beta,\rho,\vect{\rho}) 
		\leq C' \frac{\rho}{\beta} \rho^{1/(d+1)},\qquad\mbox{where } C':= C \Bigl( \frac{2^d}{(3R)^d}
			+ 6 d R^2\Bigr).   \qedhere
	\end{equation*} 
\end{proof}

\subsection{Case 2: all clusters have size $\boldsymbol{\geq K+1}$} \label{sec:case2}

\begin{prop}\label{prop:case2} 
	Suppose that $v$ satisfies Hypotheses~\ref{ass:hoelder} and~\ref{ass:maxdist}. 
	Then there are $\overline{\rho},\overline{\beta},C''>0$ such that 
	for all $\beta\in[\overline{\beta},\infty)$, $\rho\in(0,\overline{\rho})$, for all $K\in \N_0$ 
	and all $\vect{\rho}$ such that $\sum_{k=K+1}^\infty k \rho_k \leq\rho$
	and $\rho_k=0$ for $k \leq K$, 
	\begin{equation} \label{eq:case2}
		f(\beta,\rho,\vect{\rho}) \leq f^\ideal (\beta,\rho,\vect{\rho})
			+ \frac{C''\rho }{\beta} \log \beta - \frac{m}{\beta}\log\frac{m}{\rho\,\e}. 
	\end{equation}
	If in addition Hypothesis~\ref{collapse} holds, we can replace $\rho \beta^{-1}\log \beta$ 
	with $(\sum_{k=K+1}^\infty \rho_k) \beta^{-1}\log \beta$.
\end{prop}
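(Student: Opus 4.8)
The plan is to use the large-deviations lower bound contained in~\eqref{LDP}: it suffices to construct, for each large box $\Lambda$ with $|\Lambda|=N/\rho$, configurations whose cluster size distribution converges to $\vect{\rho}$ and to bound the associated restricted partition function from below; then~\eqref{LDP} and the lower semicontinuity of $f(\beta,\rho,\cdot)$ give the desired bound on $f(\beta,\rho,\vect{\rho})$, exactly as in the proof of Lemma~\ref{lem:ubfinite} and of~\cite[Proposition 3.2]{jkm}. As there, one first treats $\vect{\rho}$ of finite support and then passes to general $\vect{\rho}$ by truncation, using that $\vect{\rho}'\mapsto f^\ideal(\beta,\rho,\vect{\rho}')$ is continuous along truncations (the constraint $\sum_k k\rho_k\le\rho$ forces $\sum_k\rho_k\log(1/\rho_k)<\infty$) and the lower semicontinuity of $f(\beta,\rho,\cdot)$. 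Fix such a finitely supported $\vect{\rho}$, set $N_k:=\lfloor\rho_k|\Lambda|\rfloor$ for $k>K$, and put the remaining $N_\mathrm{rest}:=N-\sum_{k>K}kN_k$ particles, as one block, into a single \emph{giant} cluster; the hypothesis $\sum_{k>K}k\rho_k\le\rho$ gives $N_\mathrm{rest}\ge0$, and $N_\mathrm{rest}/|\Lambda|\to\rho-\sum_{k>K}k\rho_k$.

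Bound the restricted partition function from below by integrating only over configurations in which each block of $k$ particles sits within a fixed radius $r_0$ of a ground state of $U_k$, inside its own cube of side $\asymp k^{1/d}$ (which is enough by Hypothesis~\ref{ass:maxdist}), and the cubes are pairwise at distance $\ge 2R$, so that the blocks are exactly the clusters and $U_N$ splits into $\sum_j U_{k_j}$. Hypothesis~\ref{ass:hoelder}, together with the compact support of $v$ (which bounds, uniformly in $k$, the number of particles interacting with a given one), yields a uniform H\"older estimate $U_k(\vect{x})-E_k\le C\sum_i|x_i-x_i^*|^\alpha$ on such a neighbourhood, hence a Laplace-type lower bound $\int e^{-\beta U_k}\1\{\text{connected, in the cube}\}\,\dd\vect{x}\ge k!\,e^{-\beta E_k}(c\,\beta^{-d/\alpha})^{k-1}$ uniformly in $k$ (the factor $k!$ accounting for the relabellings of the ground state), and likewise for the giant cluster. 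The number of set partitions of the $N$ particles into the prescribed blocks is $N!/\bigl(\prod_{k>K}(k!)^{N_k}N_k!\cdot N_\mathrm{rest}!\bigr)$: the factors $(k!)^{N_k}N_\mathrm{rest}!$ cancel against the relabelling factors and the $1/N!$ against the $N!$, leaving $1/\prod_{k>K}N_k!$. The cubes may then be placed in many ways; parametrising each block by the position of its first particle and integrating these positions in decreasing order of block size, the volume excluded to any block by the previously placed ones is at most $C_d\bigl(\sum_{k>K}N_k k+N_\mathrm{rest}+MR^d\bigr)=C_d(\rho+m)|\Lambda|\le\tfrac12|\Lambda|$ once $\rho\le\overline\rho$, where $M:=1+\sum_{k>K}N_k$, so the placement contributes a factor $\ge(\theta|\Lambda|)^M$ for some constant $\theta>0$.

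Now take $-\tfrac1{\beta|\Lambda|}\log$ of the resulting lower bound on $Z_\Lambda$ and let $N,L\to\infty$ with $L^d=N/\rho$. Stirling's formula applied to $\prod_{k>K}(N_k!)^{-1}$ produces $\tfrac1\beta\sum_{k>K}\rho_k(\log\rho_k-1)$ plus a divergent term $\tfrac m\beta\log|\Lambda|$ which is exactly cancelled by the divergent part of $\tfrac M{\beta|\Lambda|}\log(\theta|\Lambda|)$, the surviving placement cost being $-\tfrac m\beta\log\theta\le\tfrac{Cm}\beta$; the energy terms converge to $\sum_{k>K}\rho_k E_k+\lim\tfrac{E_{N_\mathrm{rest}}}{|\Lambda|}$, and the elementary inequality $E_k\le kf_k^\cl(\beta)+Ck/\beta$ --- which follows from $Z_k^\cl(\beta)\le e^{-\beta E_k}k^{k-2}(\omega_d R^d)^{k-1}/k!$, the connected configurations containing a spanning tree and Cayley's formula bounding their number --- combined with $f_{N_\mathrm{rest}}^\cl(\beta)\to f_\infty^\cl(\beta)$ turns these into $\sum_{k>K}kf_k^\cl(\beta)\rho_k+\bigl(\rho-\sum_{k>K}k\rho_k\bigr)f_\infty^\cl(\beta)+\tfrac{C\rho}\beta$; finally the remaining powers of $c\,\beta^{-d/\alpha}$, of total exponent $\le N$, contribute at most $\tfrac{C\rho}\beta\log\beta$. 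Adding everything up gives~\eqref{eq:case2} (with slack to spare: in this argument the non-negative term $-\tfrac m\beta\log\tfrac m{\rho\,\e}$ may in fact be dropped). If Hypothesis~\ref{collapse} holds one replaces, for the large clusters, the ground-state/Laplace input by the genuine box-restricted partition function $Z_k^{\cl,ck^{1/d}}(\beta)$; by~\eqref{eq:collapse} this is asymptotically sharp, so the $\beta^{-d/\alpha}$-loss is no longer incurred per particle but only for clusters not yet in the asymptotic regime, whose density is $O(\sum_{k>K}\rho_k)$ --- the moderate sizes being handled by Lemma~\ref{lem:fkcomp} with cubes of side a fixed multiple of $kR$, since there are comparatively few of them --- so that $\rho\,\beta^{-1}\log\beta$ improves to $(\sum_{k>K}\rho_k)\beta^{-1}\log\beta$.

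I expect the main obstacle to be the geometric and combinatorial bookkeeping of the second and third paragraphs, and in particular its uniformity in $\vect{\rho}$: packing all the cubes --- including one of side of order $N^{1/d}$ --- into $\Lambda$ with $R$-separation while keeping the available volume fraction $\theta$ bounded away from $0$ by a constant independent of $\vect{\rho}$ and of the largest cluster size (this is precisely where the smallness of $\rho$ enters, and it forces the largest-first ordering), and then verifying that the divergent $\log|\Lambda|$ contributions from Stirling and from the cube placement cancel exactly. A secondary technical point is the uniformity in $k$ of the H\"older/Laplace estimate, which rests on the compact support of $v$.
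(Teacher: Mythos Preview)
Your main argument is correct and is, at its core, the same construction that underlies~\cite[Theorem~1.8]{jkm}, which the paper simply quotes: place each block near a ground state inside its own cube of side $\asymp k^{1/d}$ (Hypothesis~\ref{ass:maxdist}) with $R$-separation, use a uniform Laplace lower bound (Hypothesis~\ref{ass:hoelder}), and count placements. The paper's proof is shorter only because it cites that theorem as a black box and then converts $E_k\to k f_k^\cl(\beta)$, $e_\infty\to f_\infty^\cl(\beta)$, and $\tfrac{m}{\beta}\log\rho\to\tfrac{1}{\beta}\sum_k\rho_k(\log\rho_k-1)$ via the entropy inequality~\cite[Lemma~4.1]{jkm}. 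The last conversion is precisely where the extra term $-\tfrac{m}{\beta}\log\tfrac{m}{\rho\,\e}$ appears; by keeping the factor $\prod_k(N_k!)^{-1}$ throughout and applying Stirling directly you read off the full entropy without that loss, so your remark that this term can be dropped is a genuine (if minor) sharpening of~\eqref{eq:case2}.

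Your sketch for the Hypothesis~\ref{collapse} improvement, however, has a uniformity gap. Applying Lemma~\ref{lem:fkcomp}(ii) to the ``moderate'' sizes forces cubes of side $>3kR$, hence total cube volume of order $\sum_k\rho_k k^d$, which is not controlled by $\rho$; the placement step then fails as the truncation level goes to infinity. And Hypothesis~\ref{collapse} is only a limit statement, so the threshold separating ``moderate'' from ``large'' depends on $\beta$, which in turn makes your $\overline\rho$ $\beta$-dependent. The paper sidesteps this entirely by a convexity split: write $\vect{\rho}=\lambda\vect{\tilde\rho}+(1-\lambda)\vect{0}$ with $\lambda=\rho^{-1}\sum_k k\rho_k$ and $\vect{\tilde\rho}=\lambda^{-1}\vect{\rho}$, apply the already-proved base case to $\vect{\tilde\rho}$ (for which $\sum_k k\tilde\rho_k=\rho$, so the error becomes $\lambda\cdot\tfrac{C''\rho}{\beta}\log\beta=\tfrac{C''}{\beta}\bigl(\sum_k k\rho_k\bigr)\log\beta$), and handle the remaining piece via the one-line consequence of Hypothesis~\ref{collapse} that $f(\beta,\rho,\vect{0})\leq\rho f_\infty^\cl(\beta)=f^\ideal(\beta,\rho,\vect{0})$, obtained by putting all $N$ particles into a single cube of side $cN^{1/d}$. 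This peels off the infinite-cluster mass with no $\log\beta$ cost and no uniformity issues.
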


\begin{proof} 
	By~\cite[Theorem 1.8]{jkm}, there are $\overline \rho, \overline \beta, C''>0$ such that 
	for all $\beta \geq \overline\beta$ and $\rho \leq \overline \rho$ , 
	\begin{equation} \label{eq:thm18}
		f(\beta,\rho,\vect{\rho}) \leq \sum_{k=K+1}^\infty \rho_k \Bigl(  E_k  
			+ \frac{\log \rho}{\beta } \Bigr)
			+ \Big(\rho-  \sum_{k=K+1}^\infty k \rho_k \Big) e_\infty 
			+ \frac{C'' \rho}{\beta} \log \beta. 
	\end{equation}
	Because of~\cite[Lemma 4.3]{jkm}, we can further increase $C''$ and $\overline \beta$ so that 
	for all $k\in \N$ and $\beta \geq \overline \beta$, 
	\begin{equation*}
		\frac{E_k}{k}  \leq  f_k^\cl(\beta) + \frac{C''}{\beta}\log \beta\qquad \mbox{and}\qquad
		e_\infty \leq f_\infty^\cl(\beta) + \frac{C''}{\beta} \log \beta. 
	\end{equation*}
	Moreover,  setting $m:= \sum_{k=K+1}^\infty \rho_k$,
	\begin{equation*}
		\sum_{k=K+1}^\infty \rho_k \frac{\log \rho}{\beta} 
		 - \frac{1}{\beta} \sum_{k= K+1}^\infty \rho_k(\log \rho_k - 1) 
		\leq 
		\frac{m}{\beta} \Bigl( 2 + \log \frac{\rho}{m} + \log \frac{\sum_{k=K+1}^\infty k\rho_k}{m}\Bigr). 
	\end{equation*}
	Here we have used that 
	\begin{align*}
		- \sum_{k=K+1}^\infty \rho_k \log \frac{\rho_k}{m} \leq m \Bigl( 1+ \log 
		\frac{\sum_{k=K+1}^\infty k\rho_k}{m}\Bigr),
	\end{align*}
	see~\cite[Lemma 4.1]{jkm}. The inequality~\eqref{eq:case2} follows.  

	If in addition Hypothesis~\ref{collapse} holds we remark, first, that 
	for all $\rho \leq 1/c$ and all sufficiently large $\beta$, with $c$ as in Hypothesis~\ref{collapse}, 
	$f(\beta,\rho,\vect{0}) \leq f_\infty^\cl(\beta) = f^\ideal(\beta,\rho,\vect{0})$. Because of the convexity of $f(\beta,\rho,\cdot)$, 
	\begin{equation*}
		f(\beta,\rho,\vect{\rho}) \leq \frac{\sum_{k=1}^\infty k \rho_k}{\rho}  f(\beta,\rho, \vect{\tilde \rho}) + \frac{\rho- \sum_{k=K+1}^\infty k \rho_k}{\rho}\, f(\beta,\rho,\vect{0}),\quad \vect{\tilde\rho} := \Big( \frac{\sum_{k=1}^\infty k \rho_k}{\rho} \Big)^{-1} 
		\vect{\rho}.  
	\end{equation*}
	We deduce that in \eqref{eq:thm18} we can replace $e_\infty$ with $f_\infty^\cl(\beta)$, and the claim follows. 
\end{proof}

\subsection{General case} 

\begin{proof}[Proof of Theorem~\ref{thm:bounds}]
	We already know that 
	$f(\beta,\rho,\vect{\rho}) \geq f^\ideal(\beta,\rho,\vect{\rho})$ \cite[Lemma 3.1]{jkm}, 
	so we need only  prove the upper bound for $f(\beta,\rho,\vect{\rho})$. 	Let $K\in \N$ and $\vect{\rho}$ such that $\sum_{k=1}^\infty k \rho_k \leq \rho$. 
	The  cases  $\sum_{k=1}^k\rho_k = \rho$ and $=0$ were treated in Propositions~\ref{prop:case1} and~\ref{prop:case2}, 
	thus we may assume $0<\sum_{k=1}^K k \rho_k < \rho$. 
	Set $\rho_{\leq K} := \sum_{k=1}^K k \rho_k$. 
	Let $\overline{\rho}$ be as in Proposition~\ref{prop:case2}  and suppose that 
	$\rho \leq \overline \rho$. Set  
	\begin{equation*}
		\underline{\rho} := \frac{\rho_{\leq K}}{1 - \rho_{\leq K}/\overline{\rho}}
	\end{equation*}
	and  
	\begin{equation*}
		\rho_k^\text{small}  := \begin{cases}
				\rho_k / (1 - \rho_{\leq K}/\overline{\rho}) &\mbox{if } 1 \leq k \leq K,\\
				0, &\mbox{if } k \geq K+1, 
		          \end{cases} \qquad \mbox{and}\qquad
		\rho_k^\text{large}  := \begin{cases}
				0 &\mbox{if } 1 \leq k \leq K,\\
				\rho_k / [\rho_{\leq K}/\overline{\rho}], &\mbox{if } k \geq K+1. 
		          \end{cases}
	\end{equation*}
It was shown in \cite[Section 2.5]{jkm} that the map $(\rho,\vect{\rho})\mapsto f(\beta,\rho,\vect{\rho})$ is a supremum of convex functions and hence is itself convex. Thus we can write 
	\begin{equation} \label{eq:ft}
		f(\beta,\rho,\vect{\rho}) \leq \Bigl(1-\frac{\rho_{\leq K}}{\overline{\rho}}\Bigr) f\bigl(\beta,\underline{\rho}, \vect{\rho}^\text{small} \bigr) + 
			\frac{\rho_{\leq K}}{\overline{\rho}}\, f\bigl( \beta,\overline{\rho}, \vect{\rho}^\text{large} \bigr). 
	\end{equation}
	Propositions~\ref{prop:case1} and~\ref{prop:case2} yield 
	\begin{equation*} \label{eq:ubound} 
	\begin{aligned}
		f(\beta,\rho,\vect{\rho}) - f^\ideal(\beta,\rho,\vect{\rho})  
		& \leq  C' \frac{\rho_{\leq K}}{\beta} 
		\Bigl(\sum_{k=1}^K \rho_k^\text{small} \Bigr)^{1/(d+1)}
			- \frac{m_{\leq K}}{\beta} \log \bigl( 1- \frac{\rho_{\leq K}}{\overline{\rho}} \bigr)\\
			& \qquad + \frac{C''}{\beta} (\rho- \rho_{\leq K}) \log \beta  - 2 \frac{m_{>K}}{\beta}
		\log \frac{m_{>K}}{\overline{\rho}\,\e}, 
	\end{aligned}
	\end{equation*}
	provided $\beta \geq \overline{\beta}$, $\underline{\rho} \leq (2^{d+1}/3)^{1+1/d}$, 
	and $K< \underline{\rho}^{-1/(d+1)}$.  The conditions on $\underline{\rho}$ and $K$ are certainly satisfied if we assume 
	that $\rho \leq \frac{2}{3}\min(1,\overline{\rho})$ and $K \leq (\rho /3)^{-1/(d+1)}$. 
	If this is the case, we can further bound the right-hand side of \eqref{eq:ubound}
	as 
	\begin{equation*}
		\frac{C' 3^{1/(d+1)} + C}{\beta} \bigl( \rho_{\leq K} \bigr)^{(d+2)/(d+1)} 
			+ \frac{C''}{\beta} (\rho- \rho_{\leq K}) \log \beta  - 2 \frac{m_{>K}}{\beta}
		\log \frac{m_{>K}}{\overline{\rho}\,\e}. 
	\end{equation*}
	Since $m_{>K} \leq \rho \leq 2/3$ , we can upper bound 
	$|1+ \log \overline{\rho}|\leq - D(\overline{\rho}) \log m_{>K}$ for some suitable 
	constant $C(\overline{\rho}) >0$. 
	We obtain the bound of Theorem~\ref{thm:bounds}. The improved bound under Hypothesis~\ref{collapse} is 
	deduced from the corresponding statement in Proposition~\ref{prop:case2}. 
\end{proof}

\section{Proof of Theorems~\ref{thm:exponential} and~\ref{thm:ideal}}

We prove Theorems~\ref{thm:exponential} and~\ref{thm:ideal} in Sections~\ref{sect-proofthmexpo} and \ref{sect-proofthmideal} below, respectively, after providing some preparations in Section~\ref{sect-tailesti}.

\subsection{Tail estimates}\label{sect-tailesti}

The main idea for the proof of Theorems~\ref{thm:exponential} and~\ref{thm:ideal} 
is to apply Theorem~\ref{thm:bounds} to the minimiser $\vect{\rho}^\ideal(\beta,\rho)$ of the ideal rate function. Therefore we will need estimates on the tails of the ideal minimiser, see  Lemmas~\ref{lem:idealtail2} and~\ref{lem:idealtail} below. 
First we recall a statement about the low-temperature behaviour of the cluster free energy. Recall the ground-state energy $E_k$ defined in \eqref{Ekdef}.

\begin{lemma}[Cluster free energy at low temperature] \label{lem:fk-lowtemp}
	There are $\beta_0>0$ and $C>0$ such that 
	for $\beta \geq \beta_0$, 
	\begin{equation*}
		\forall k \in \N:\ f_k^\cl(\beta) \geq \frac{E_k}{k} - \frac C \beta,\qquad\mbox{and}\qquad 
		f_\infty^\cl(\beta) \geq e_\infty - \frac C\beta.
	\end{equation*}
	Moreover, for each fixed $k \in \N$, 
	$\lim_{\beta \to \infty} f_k^\cl(\beta) = E_k/k$. 
	If in addition the pair potential satisfies Hypothesis~\ref{ass:hoelder}, 
	then for $\beta \geq \beta_0$, 
	\begin{equation*}
		\forall k \in \N:\ f_k^\cl(\beta) \leq \frac{E_k}{k} + \frac C \beta \log \beta, \qquad\mbox{and}\qquad f_\infty^\cl(\beta) \leq e_\infty + \frac C\beta\log \beta.
	\end{equation*}
	and $\lim_{\beta \to \infty} f_\infty^\cl(\beta) = e_\infty$. 
\end{lemma}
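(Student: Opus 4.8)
The plan is to prove the four assertions of Lemma~\ref{lem:fk-lowtemp} separately, the lower bounds being the easy direction and the upper bounds relying on Hypothesis~\ref{ass:hoelder}. For the lower bound $f_k^\cl(\beta)\geq E_k/k - C/\beta$, I would start from the definition \eqref{Zcldef}, drop the connectedness indicator (which only increases the integral? no --- it decreases it, so this is not the right move), and instead bound $U_k(0,x_2,\ldots,x_k)\geq E_k$ pointwise, so that $Z_k^\cl(\beta)\leq \e^{-\beta E_k}\cdot\frac 1{k!}\int_{(\R^d)^{k-1}}\1\{\cdots\text{ connected}\}\,\dd x_2\cdots\dd x_k$. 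The remaining volume integral is the number of connected configurations of $k$ points with one pinned at the origin and all pairwise-adjacent distances at most $R$ (in the $R$-graph sense); a standard cluster-counting estimate (e.g. a spanning-tree bound as in \cite{jkm}) gives this a bound of the form $C_1^k$ for some constant $C_1$, hence $f_k^\cl(\beta)=-\frac 1{\beta k}\log Z_k^\cl(\beta)\geq \frac{E_k}{k}-\frac 1{\beta k}\log C_1^k = \frac{E_k}{k}-\frac{\log C_1}{\beta}$, which is the claim with $C=\log C_1$ (taking $C_1\geq 1$ WLOG). Passing to the limit $k\to\infty$ using \eqref{cfe} and $e_\infty=\lim E_k/k$ gives $f_\infty^\cl(\beta)\geq e_\infty - C/\beta$.

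For the pointwise limit $\lim_{\beta\to\infty}f_k^\cl(\beta)=E_k/k$ at fixed $k$, I would combine the lower bound just obtained (which gives $\liminf_{\beta\to\infty}f_k^\cl(\beta)\geq E_k/k$) with a matching upper bound obtained by Laplace's method: restrict the integral in \eqref{Zcldef} to a small neighbourhood of a minimiser of $U_k$, on which $U_k\leq E_k+\eps$ by continuity (Assumption (V)(5)), and on which the configuration is still connected provided $R$ is large enough --- here one uses Hypothesis~\ref{ass:hoelder} to know the minimiser has all interparticle distances $\geq r_{\min}$, but actually for connectedness at scale $R>b$ one needs an upper bound on minimiser distances, which for fixed $k$ is automatic by compactness of the set of minimisers modulo translation, or can be arranged directly. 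This yields $Z_k^\cl(\beta)\geq c_k\,\e^{-\beta(E_k+\eps)}$ for a constant $c_k>0$ (the volume of the neighbourhood), hence $\limsup_{\beta\to\infty}f_k^\cl(\beta)\leq E_k/k+\eps/k$; let $\eps\to 0$.

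The main work, and the place where Hypothesis~\ref{ass:hoelder} is essential, is the upper bound $f_k^\cl(\beta)\leq E_k/k + \frac C\beta\log\beta$ \emph{uniformly in} $k$. The strategy is exactly the quantitative Laplace estimate flagged in the discussion after Hypothesis~\ref{ass:hoelder}: fix a minimiser $\vect{x}^*=(x_1^*,\ldots,x_k^*)$ of $U_k$, pin $x_1^*$ at the origin, and integrate over $\vect{x}$ with each $x_j$ in a ball of radius $\eta/\beta$ (for a suitable power --- in fact radius $\sim\beta^{-1/\alpha}$ with $\alpha$ the H\"older exponent, so that the energy increment is $O(1)$) around $x_j^*$. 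On this polydisc, uniform H\"older continuity of $v$ on $[r_{\min},\infty)$ together with $|x_i^*-x_j^*|\geq r_{\min}$ (Hypothesis~\ref{ass:hoelder}) gives $U_k(\vect{x})\leq E_k + c\,k\,(\text{radius})^\alpha = E_k + O(1)$ --- the point being that each of the $O(k)$ pairs within interaction range $b$ contributes a H\"older-bounded increment and the number of such pairs is $O(k)$ by finite range; more carefully, one bounds $U_k(\vect x)-U_k(\vect x^*)=\sum_{i<j}(v(|x_i-x_j|)-v(|x_i^*-x_j^*|))$ and only pairs with $|x_i^*-x_j^*|<b+1$ (say) contribute, and these number $O(k)$ by stability/finite-range packing. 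One must also check the configuration stays connected at scale $R>b$ on this polydisc, which holds if the unperturbed minimiser is connected at scale $b$ (true: points at mutual distance $>b$ don't interact, and a non-interacting split would contradict minimality unless... actually here one needs a genuine argument or an appeal to Hypothesis~\ref{ass:maxdist} --- but since this lemma only assumes Hypothesis~\ref{ass:hoelder}, I would instead invoke that minimisers are connected in the $b$-graph, which follows because a disconnected minimiser would have two pieces at distance $>b$ that could be brought to distance within the attractive window $(b-\delta,b)$ by Assumption (V)(4), strictly lowering the energy). Then $Z_k^\cl(\beta)\geq (c\,\beta^{-d/\alpha})^{k}\,\e^{-\beta E_k - O(1)}$, so $f_k^\cl(\beta)=-\frac 1{\beta k}\log Z_k^\cl(\beta)\leq \frac{E_k}{k}+\frac{d}{\alpha}\frac{\log\beta}{\beta}+O(\tfrac 1{\beta k})\leq \frac{E_k}{k}+\frac C\beta\log\beta$ for $\beta\geq\beta_0$, uniformly in $k$. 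Taking $k\to\infty$ gives $f_\infty^\cl(\beta)\leq e_\infty+\frac C\beta\log\beta$, and combining the two-sided bounds gives $\lim_{\beta\to\infty}f_\infty^\cl(\beta)=e_\infty$. The main obstacle is making the uniform-in-$k$ H\"older bound on $U_k(\vect x)-E_k$ clean: one needs that the number of pairs $(i,j)$ with $|x_i^*-x_j^*|\leq b$ is $O(k)$ (not $O(k^2)$), which is a packing/stability consequence of Hypothesis~\ref{ass:hoelder} (the lower bound $r_{\min}$ on distances bounds the number of neighbours of each point within radius $b$ by a dimensional constant), so that the total energy perturbation is $\sum_{\text{such pairs}}|v(|x_i-x_j|)-v(|x_i^*-x_j^*|)|\leq O(k)\cdot L\cdot(\text{radius})^\alpha$ with $L$ the H\"older constant, and this is $O(1)$ with the radius chosen $\sim\beta^{-1/\alpha}$.
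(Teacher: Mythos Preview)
Your approach is the right one, and is presumably what lies behind the citations in the paper's own (two-line) proof: the paper simply refers to \cite[Lemmas~4.3 and~4.5]{jkm} for the lower and upper bounds, and invokes $f_k^\cl(\beta)\leq f_k^{\cl,a}(\beta)$ (Lemma~\ref{lem:fkcomp}) to pass from the box version to the full cluster free energy. Your spanning-tree bound for the lower part and your quantitative Laplace estimate around a ground state for the upper part are exactly the standard arguments one expects to find there.

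There is, however, one genuine gap in your upper-bound sketch. You write $Z_k^\cl(\beta)\geq (c\,\beta^{-d/\alpha})^{k}\,\e^{-\beta E_k - O(1)}$, but the definition~\eqref{Zcldef} carries a prefactor $1/k!$. If you integrate each $x_j$ over a single ball $B(x_j^*,\eta)$, you only get
\[
Z_k^\cl(\beta)\ \geq\ \frac{1}{k!}\,(\omega_d\eta^d)^{k-1}\,\e^{-\beta E_k - O(k)},
\]
and then $-\tfrac{1}{\beta k}\log(\cdot)$ produces an extra term $\tfrac{\log k!}{\beta k}\sim \tfrac{\log k}{\beta}$, which is \emph{not} dominated by $\tfrac{C}{\beta}\log\beta$ uniformly in $k$. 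The fix is standard: since the minimiser positions satisfy $|x_i^*-x_j^*|\geq r_{\min}$ (Hypothesis~\ref{ass:hoelder}), for $\eta<r_{\min}/2$ the balls $B(x_2^*,\eta),\ldots,B(x_k^*,\eta)$ are pairwise disjoint, so one may sum over all $(k-1)!$ permutations of the labels $2,\ldots,k$, each giving a disjoint region of integration. This replaces $1/k!$ by $1/k$ and the offending term becomes $\tfrac{\log k}{\beta k}\leq \tfrac{1}{\beta\e}$, harmless.

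Two minor corrections that do not affect the outcome: (i) with radius $\eta=\beta^{-1/\alpha}$ and $O(k)$ interacting pairs, the energy increment is $U_k(\vect x)-E_k=O(k\eta^\alpha)=O(k/\beta)$, hence $\beta(U_k-E_k)=O(k)$, not $O(1)$; after dividing by $\beta k$ this is $O(1/\beta)$ and is absorbed. (ii) Your H\"older bound on $v$ is stated on $[r_{\min},\infty)$, but perturbed interparticle distances can dip just below $r_{\min}$; since $r_{\min}>r_{\rm hc}$ and $v$ is continuous on $(r_{\rm hc},\infty)$ by Assumption~(V)(5), one simply uses uniform continuity on $[r_{\min}-2\eta,\,b+2\eta]$ (or shrinks $r_{\min}$ slightly) to cover this.
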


\begin{proof}
The limit statement for $f_k^\cl(\beta)$ as $\beta \to \infty$ follows by a standard argument for exponential integrals. The lower bounds for $f_k^\cl(\beta)$ and $f_\infty^\cl(\beta)$ have been proven in \cite[Lemma 4.3]{jkm}. The upper bounds follow from~\cite[Lemma 4.5]{jkm} and the observation that for all $\beta,a>0$ and $k\in \N$, $f_k^\cl(\beta) \leq f_k^{\cl,a}(\beta)$, see Lemma~\ref{lem:fkcomp}.
\end{proof}

In the following, we omit the arguments $(\beta,\rho)$ of the objects $\vect{\rho}^\ideal$ and $m^\ideal$ for brevity. Recall that $\vect{\rho}^\ideal$ is given in \eqref{eq:rhok} and that $m^\ideal=\sum_{k\in\N}\rho^\ideal_k$ and $m_{>K}^\ideal=\sum_{k=K+1}^\infty \rho^\ideal_k$.

\begin{lemma} \label{lem:idealtail2}
Suppose that Hypothesis~\ref{secondorder} is true, and assume that $d\geq 2$. Then there are $C,c,\overline\beta>0$, $K_0 \in \N$ such that for all $K \geq K_0$, $\beta \in[\overline\beta,\infty)$, and $\rho >0$, 
  \begin{equation}\label{rhoidealesti}
		\frac{\sum_{k\in \N} k \rho_k^\ideal}{\sum_{k\in \N} \rho_k^\ideal} \leq C,\qquad\mbox{and}\qquad
      \frac{m_{>K}^\ideal}{m^\ideal} 
      \leq \frac{\sum_{k= K+1}^\infty k \rho_k^\ideal}{m^\ideal} 
    \leq \e^{ - \beta c K^{1-1/d}}.
  \end{equation}
\end{lemma}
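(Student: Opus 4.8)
The plan is to argue directly from the explicit form of the ideal minimiser. Recall from~\eqref{eq:rhok} that $\rho_k^\ideal = \e^{\beta k[\mu^\ideal - f_k^\cl(\beta)]}$, that $\mu^\ideal=\mu^\ideal(\beta,\rho)\le f_\infty^\cl(\beta)$ for every $\beta,\rho$, and that $f_1^\cl(\beta)=0$ because $Z_1^\cl(\beta)=1$. Hence $\rho_1^\ideal=\e^{\beta\mu^\ideal}$, and the only property of the normalisation $m^\ideal=\sum_k\rho_k^\ideal$ that I shall use is the trivial bound $m^\ideal\ge\rho_1^\ideal=\e^{\beta\mu^\ideal}$. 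The heart of the matter is the elementary identity
\begin{equation*}
	\frac{\rho_k^\ideal}{\rho_1^\ideal}=\e^{\beta(k-1)\mu^\ideal-\beta k f_k^\cl(\beta)},\qquad k\in\N .
\end{equation*}
Since $k-1\ge0$ and $\mu^\ideal\le f_\infty^\cl(\beta)$, I may replace $(k-1)\mu^\ideal$ by the larger quantity $(k-1)f_\infty^\cl(\beta)$ in the exponent; writing $kf_k^\cl(\beta)-(k-1)f_\infty^\cl(\beta)=f_\infty^\cl(\beta)+\bigl(kf_k^\cl(\beta)-kf_\infty^\cl(\beta)\bigr)$, bounding the last bracket below by $C_\star k^{1-1/d}$ via Hypothesis~\ref{secondorder}, and using the unconditional lower bound $f_\infty^\cl(\beta)\ge e_\infty-C_0/\beta$ from Lemma~\ref{lem:fk-lowtemp}, I obtain, for $\beta$ large and all $k\in\N$,
\begin{equation*}
	\frac{\rho_k^\ideal}{\rho_1^\ideal}\;\le\;\e^{\beta|e_\infty|+C_0}\,\e^{-\beta C_\star k^{1-1/d}} .
\end{equation*}
The decisive point is that $\mu^\ideal$ has vanished: the prefactor depends only on the potential. (Together with $d\ge2$ this also gives $m^\ideal<\infty$ and $\sum_k k\rho_k^\ideal<\infty$.)

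Next I would choose $K_0\in\N$ so large that $\smfrac{1}{2}C_\star K_0^{1-1/d}\ge|e_\infty|$ --- this is where $d\ge2$ is used, so that $K^{1-1/d}\to\infty$ --- which makes the prefactor harmless for $k\ge K_0$: $\rho_k^\ideal/\rho_1^\ideal\le\e^{C_0}\e^{-\smfrac{1}{2}\beta C_\star k^{1-1/d}}$ for such $k$. For the mean cluster size I split
\begin{equation*}
	\sum_{k\in\N}k\rho_k^\ideal=\sum_{k<K_0}k\rho_k^\ideal+\sum_{k\ge K_0}k\rho_k^\ideal\;\le\;K_0\,m^\ideal+\rho_1^\ideal\,\e^{C_0}\sum_{k\ge K_0}k\,\e^{-\smfrac{1}{2}\beta C_\star k^{1-1/d}},
\end{equation*}
using $\sum_{k<K_0}\rho_k^\ideal\le m^\ideal$ for the first group. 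When $d\ge2$ one has $\sum_{k\ge1}k\,\e^{-\alpha k^{1-1/d}}\to0$ as $\alpha\to\infty$, so there is $\overline\beta$ with the last series $\le1$ for $\beta\ge\overline\beta$; combined with $\rho_1^\ideal\le m^\ideal$ this yields $\sum_k k\rho_k^\ideal\le(K_0+\e^{C_0})m^\ideal$, which is the first inequality in~\eqref{rhoidealesti}.

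For the tail bound, the inequality $m_{>K}^\ideal\le\sum_{k>K}k\rho_k^\ideal$ is immediate (as $\rho_k^\ideal\le k\rho_k^\ideal$). For the remaining inequality, for $K\ge K_0$ the ratio bound and $m^\ideal\ge\rho_1^\ideal$ give
\begin{equation*}
	\frac{1}{m^\ideal}\sum_{k>K}k\rho_k^\ideal\;\le\;\e^{C_0}\sum_{k>K}k\,\e^{-\smfrac{1}{2}\beta C_\star k^{1-1/d}},
\end{equation*}
and a routine tail estimate for this series --- monotonicity of $x\mapsto x\e^{-\alpha x^{1-1/d}}$ for $\alpha$ large, comparison with $\int_K^\infty x\,\e^{-\alpha x^{1-1/d}}\,\dd x$, and the substitution $u=x^{1-1/d}$, which bounds the integral by a fixed polynomial in $K$ times $\e^{-\alpha K^{1-1/d}}$ --- shows that, after possibly enlarging $\overline\beta$ and $K_0$, the right-hand side is $\le\e^{-\beta c K^{1-1/d}}$ for a suitable $c\in(0,\infty)$, uniformly in $\beta\ge\overline\beta$, $\rho>0$ and $K\ge K_0$.

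I expect the only genuinely delicate issue to be the control of the normalisation $m^\ideal$. The crude bound $\rho_k^\ideal\le\e^{-\beta C_\star k^{1-1/d}}$, which follows at once from $\mu^\ideal\le f_\infty^\cl$, is useless in the dilute regime, where $\rho\to0$ forces $\mu^\ideal\to-\infty$ and hence $m^\ideal\ge\rho_1^\ideal=\e^{\beta\mu^\ideal}$ becomes exponentially small in $\beta$. Passing to the ratio $\rho_k^\ideal/\rho_1^\ideal$ is precisely what cancels this $\mu^\ideal$-dependence and replaces it by $\e^{\beta|e_\infty|}$; this residual factor, which is itself exponentially large in $\beta$ when $|e_\infty|>C_\star$, is then neutralised by ceding the finitely many small sizes $k<K_0$ to the trivial bound $\sum_{k<K_0}k\rho_k^\ideal\le K_0\,m^\ideal$.
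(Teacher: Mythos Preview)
Your proof is correct and follows the same skeleton as the paper's: bound a ratio $\rho_k^\ideal/\rho_j^\ideal$ for a fixed reference index $j$, replace $\mu^\ideal$ by its upper bound $f_\infty^\cl$ (this is what yields uniformity in $\rho$, since $k-j\ge 0$ for the relevant $k$), invoke Hypothesis~\ref{secondorder} for the stretched-exponential decay, and absorb the leftover $\beta$-dependent prefactor by choosing $K_0$ large enough that $K_0^{1-1/d}$ beats it. The only real difference is the choice of reference term. The paper takes $j=k(\nu^*)$, the minimiser of $k\mapsto E_k-ke_\infty$, and then uses Lemma~\ref{lem:fk-lowtemp} to identify the resulting prefactor as $\e^{\beta(\nu^*+O(\beta^{-1}\log\beta))}$; you take $j=1$, which is slightly more elementary because $f_1^\cl=0$ exactly, so only the unconditional lower bound $f_\infty^\cl\ge e_\infty-C_0/\beta$ is needed and the prefactor comes out as $\e^{\beta|e_\infty|+C_0}$. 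Both prefactors are then neutralised in the same way by the choice of $K_0$, and the derivation of the first inequality in~\eqref{rhoidealesti} from the tail bound is identical.
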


\begin{proof}
According to Hypothesis~\ref{secondorder}, we can choose $c>0$, $K_0\in \N$, and $\overline\beta$ such that for $\beta \geq \overline\beta$ and $K \geq K_0$, 
	\begin{equation*}
     \sum_{k=K+1}^{\infty} k \exp( \beta k [f_\infty^\cl(\beta) - f_k^\cl(\beta)]) \leq \e^{ - \beta c K^{1-1/d}}.
  \end{equation*}
  Let $k(\nu^*)$ be such that $E_{k(\nu^*)} - k(\nu^*) e_\infty = \nu^*$. Recall from \eqref{eq:rhok} that $\rho_k^\ideal(\beta,\rho) \leq \exp(\beta k[f_\infty^\cl(\beta) - f_k^\cl(\beta)])$. 
  Then, for $K \geq \max\{K_0,k(\nu^*)\}$, we have, also using Lemma~\ref{lem:fk-lowtemp},
$$   
\frac{\sum_{k= K+1}^\infty k \rho_k^\ideal}{\sum_{k=1}^\infty \rho_k^\ideal} 
     \leq \frac{\e^{- \beta c K^{1-1/d}}}{ \rho_{k(\nu^*)}^\ideal} 
     = \frac{ \e^{- \beta c K^{1-1/d}}}{\e^{- \beta (\nu^* + O( \beta ^{-1} \log \beta))}} \leq \e^{ - \beta (c K^{1-1/d} - 2\nu^*)},
$$
where the last inequality holds for sufficiently large $\beta$. For $K$ sufficiently large, this last expression is  smaller than $\e^{- \beta c K^{1-1/d}/2}$. Writing $c/2$ instead of $c$, this shows the second assertion in \eqref{rhoidealesti}. Furthermore, for these $K$, 
     \begin{equation*}
	\frac{\sum_{k\in \N} k\rho_k^\ideal}{m^\ideal} \leq \frac{\sum_{k=1}^K k \rho_k^\ideal}
      {\sum_{k=1}^K \rho_k^\ideal} + \e^{ - \beta c K^{1-1/d}/2} \leq K +1.
    \end{equation*}
Using this for $K=K_0$, this also shows the first assertion in \eqref{rhoidealesti}.
\end{proof}

If we do not assume that Hypothesis~\ref{secondorder} is true, we have an analogue 
of Lemma~\ref{lem:idealtail2} for densities much smaller than $\exp(-\beta \nu^*)$, valid for all $d\in\N$. 

\begin{lemma} \label{lem:idealtail}
	Let $v$ satisfy Hypothesis~\ref{ass:hoelder}.
	Fix $\eps >0$. Then there are $\beta_\eps,K_\eps, \delta_\eps>0$ such that 
	for all $\beta \in[\beta_\eps,\infty)$, $\rho \leq \exp(- \beta (\nu^*+\eps))$ and 
	 $K \geq K_\eps$, 
	\begin{equation*}
		\frac{\rho}{m^\ideal} \leq C_\eps,\qquad\mbox{and}\qquad  
		 \frac{m_{>K}^\ideal}{m^\ideal} \leq \frac{\sum_{k=K+1}^\infty k \rho_k^\ideal}{m^\ideal} \leq \e^{- \beta K \delta_\eps}.
	\end{equation*}
\end{lemma}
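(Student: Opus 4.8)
The plan is to mimic the proof of Lemma~\ref{lem:idealtail2}, but to replace the surface-order bound $C k^{1-1/d}$ coming from Hypothesis~\ref{secondorder} with the linear bound $k(\nu^*+\eps') - O(k^{0})$ that is available unconditionally under Hypothesis~\ref{ass:hoelder} alone, at the cost of restricting to densities $\rho \leq \e^{-\beta(\nu^*+\eps)}$.

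\begin{proof}
First recall from~\eqref{eq:rhok} and Lemma~\ref{lem:fk-lowtemp} that
$\rho_k^\ideal(\beta,\rho) = \e^{\beta k[\mu^\ideal(\beta,\rho)-f_k^\cl(\beta)]} \leq \e^{\beta k[f_\infty^\cl(\beta)-f_k^\cl(\beta)]} \leq \e^{-\beta(E_k-ke_\infty) + Ck\log\beta/\beta\cdot\beta}$; more precisely, using the lower bound $f_k^\cl(\beta)\geq E_k/k - C/\beta$ and the upper bound $f_\infty^\cl(\beta)\leq e_\infty + C\beta^{-1}\log\beta$ from Lemma~\ref{lem:fk-lowtemp}, we get
\begin{equation*}
	\rho_k^\ideal(\beta,\rho) \leq \e^{-\beta(E_k - k e_\infty) + C k \log\beta}.
\end{equation*}
Since $E_k - k e_\infty \geq \nu^*$ for all $k$ (by definition of $\nu^*$ and the fact that $E_k - ke_\infty\geq\inf_j(E_j-je_\infty)=\nu^*$), and since for $\beta$ large $C\log\beta \leq \eps\beta/2$, we obtain $\rho_k^\ideal \leq \e^{-\beta(\nu^* + \eps/2)\,\wedge\,(\text{something growing in }k)}$. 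To control the tail sum I would split off one extra power: $E_k - k e_\infty \geq \nu^* + (k-k(\nu^*))\cdot 0$ is too weak, so instead I use that $E_k - ke_\infty \to \infty$; more efficiently, pick any $k$ with $E_k-ke_\infty\geq\nu^*$ and use superadditivity-type bounds, or simply note $E_{k+j}-(k+j)e_\infty \geq (E_k-ke_\infty) + (E_j - je_\infty)\geq (E_k-ke_\infty)+\nu^*$ to get linear growth: $E_k - ke_\infty \geq \lfloor k/k_0\rfloor \nu^* \geq c_0 k$ for a suitable $c_0>0$ once $k\geq$ some threshold. Hence for $k\geq K_\eps$ large, $\rho_k^\ideal \leq \e^{-\beta(c_0 k/2)}$, and $\sum_{k=K+1}^\infty k\rho_k^\ideal \leq \sum_{k=K+1}^\infty k\,\e^{-\beta c_0 k/2} \leq \e^{-\beta c_0 K/4}$ for $\beta$ large, which gives the claimed form $\e^{-\beta K\delta_\eps}$ with $\delta_\eps := c_0/4$.

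For the denominator: let $k(\nu^*)$ be an index attaining (or nearly attaining) $E_{k(\nu^*)} - k(\nu^*) e_\infty = \nu^*$. When $\rho \leq \e^{-\beta(\nu^*+\eps)}$, the mass equation~\eqref{eq:musol} (or the definition $\mu^\ideal=f_\infty^\cl$ in the saturated case) forces $\mu^\ideal(\beta,\rho)$ to be bounded above, but more to the point I need a \emph{lower} bound on $m^\ideal=\sum_k\rho_k^\ideal$. Since $\rho_1^\ideal = \e^{\beta[\mu^\ideal - f_1^\cl(\beta)]}$ and $\sum_k k\rho_k^\ideal \geq \rho_1^\ideal$, while (from the analysis in Section~\ref{sect-discussion}) $\sum_k k\rho_k^\ideal = \min(\rho,\rho_\sat^\ideal(\beta))$, I instead compare $m^\ideal$ to $\rho$ directly. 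Using Lemma~\ref{lem:idealtail2}'s argument is unavailable, so: from the first displayed inequality $\rho_k^\ideal \leq \e^{-\beta(E_k-ke_\infty)+Ck\log\beta}$ and the already-established tail bound, $\sum_{k>K_\eps} k\rho_k^\ideal$ is exponentially small, so $\rho \wedge \rho_\sat^\ideal = \sum_k k\rho_k^\ideal \leq K_\eps \sum_{k\leq K_\eps}\rho_k^\ideal + (\text{exp.\ small}) \leq K_\eps m^\ideal + (\text{exp.\ small})$, whence $m^\ideal \geq (\rho\wedge\rho_\sat^\ideal)/(2K_\eps) \geq \rho/(2K_\eps)$ once $\rho \leq \rho_\sat^\ideal(\beta)$ — and for $\rho\leq\e^{-\beta(\nu^*+\eps)}$ one is indeed in the unsaturated regime for $\beta$ large since $\rho_\sat^\ideal(\beta)\geq\rho_{k(\nu^*)}^\ideal\cdot k(\nu^*)\geq\e^{-\beta(\nu^*+o(1))}>\rho$. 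Therefore $\rho/m^\ideal \leq 2K_\eps =: C_\eps$, and dividing the tail bound $\sum_{k>K}k\rho_k^\ideal\leq\e^{-\beta K\delta_\eps}$ by $m^\ideal\geq\rho/(2K_\eps)$ — or more simply observing $m^\ideal\geq\rho_{k(\nu^*)}^\ideal\geq\e^{-\beta(\nu^*+\eps/2)}$ and absorbing this into the exponent by shrinking $\delta_\eps$ and enlarging $K_\eps$ — yields $m_{>K}^\ideal/m^\ideal \leq \e^{-\beta K\delta_\eps}$ as claimed.
\end{proof}

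The main obstacle I anticipate is the bookkeeping around the lower bound on $m^\ideal$: one must be careful to stay in the unsaturated regime (so that $\sum_k k\rho_k^\ideal=\rho$ exactly) and to convert the ``mass lives on bounded $k$'' statement into a uniform-in-$\beta$ constant $C_\eps$, which is why the restriction $\rho\leq\e^{-\beta(\nu^*+\eps)}$ — ensuring a definite spectral gap of size $\eps$ between the relevant exponential rates — is essential and cannot be dropped without Hypothesis~\ref{secondorder}. The tail estimate itself is routine once the linear growth $E_k - ke_\infty\geq c_0 k$ for large $k$ is extracted from subadditivity of $(E_k - ke_\infty)$ below and $\nu^*>0$.
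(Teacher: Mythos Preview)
Your tail estimate rests on the claim that $E_k - ke_\infty$ grows linearly in $k$, which you derive from the inequality $E_{k+j} - (k+j)e_\infty \geq (E_k - ke_\infty) + (E_j - je_\infty)$. This inequality is false: the ground-state energy is \emph{sub}additive, $E_{k+j} \leq E_k + E_j$ (place optimal $k$- and $j$-configurations far apart), so after subtracting $(k+j)e_\infty$ you get the reverse inequality. More decisively, linear growth $E_k - ke_\infty \geq c_0 k$ is impossible outright, since by definition $e_\infty = \lim_{k\to\infty} E_k/k$, i.e.\ $(E_k - ke_\infty)/k \to 0$. Hence the bound $\rho_k^\ideal \leq \e^{-\beta(E_k - ke_\infty) + Ck\log\beta}$, which comes only from the universal inequality $\mu^\ideal \leq f_\infty^\cl$, is useless for a summable tail: the exponent is $o(k)$ and is swamped by the $Ck\log\beta$ error term. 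Notice also that your numerator bound never uses the hypothesis $\rho \leq \e^{-\beta(\nu^*+\eps)}$; that is already a warning sign, since without it (or Hypothesis~\ref{secondorder}) there is no reason for the tail to decay.

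The paper's argument uses the density restriction in the \emph{numerator}, not just the denominator. From the mass equation $\sum_k k\,\e^{\beta k(\mu^\ideal - f_k^\cl)} = \rho$ one gets, for any fixed $k_0$, $\mu^\ideal \leq f_{k_0}^\cl + (\beta k_0)^{-1}\log\rho$; optimising in $k_0$ and invoking Lemma~\ref{lem:fk-lowtemp} yields $\mu^\ideal \leq \mu_\eps + O(\beta^{-1}\log\beta)$ with $\mu_\eps := \inf_k (E_k - \nu^* - \eps)/k$, and one checks $\mu_\eps < e_\infty$ strictly. Since $E_k/k \to e_\infty > \mu_\eps$, for all large $k$ one has $E_k/k - \mu_\eps \geq \Delta_\eps > 0$, hence $\rho_k^\ideal/\rho \leq \e^{-\beta k \Delta_\eps/2}$. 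The strict gap $e_\infty - \mu_\eps > 0$ is precisely what the assumption $\rho \leq \e^{-\beta(\nu^*+\eps)}$ provides; it replaces the surface-order gap from Hypothesis~\ref{secondorder}. Your lower bound on $m^\ideal$ via $\rho = \sum_k k\rho_k^\ideal \leq K_\eps m^\ideal + (\text{tail})$ is then correct and is essentially what the paper does as well.
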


For the proof, we give first a lower bound for the saturation density of the ideal mixture, which is of interest in itself. Under additional hypotheses, a stronger statement holds, see Prop.~\ref{prop:satsaha}. 

\begin{lemma}
	\label{lem:rsat}
	$$ \liminf_{\beta \to \infty} \frac 1\beta \log \rho_\sat^\ideal(\beta) \geq - \nu^* . $$ 
\end{lemma}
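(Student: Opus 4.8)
The plan is to bound $\rho_\sat^\ideal(\beta)$ from below by a single well-chosen term of the defining series \eqref{rhosatdef} and show that this term is essentially $\e^{-\beta\nu^*}$ up to subexponential corrections. Recall that
\[
	\rho_\sat^\ideal(\beta) = \sum_{k\in\N} k\,\e^{\beta k[f_\infty^\cl(\beta)-f_k^\cl(\beta)]} \geq k^*\,\e^{\beta k^*[f_\infty^\cl(\beta)-f_{k^*}^\cl(\beta)]}
\]
for any fixed $k^*\in\N$, simply by dropping all other (nonnegative) summands. The key input is Lemma~\ref{lem:fk-lowtemp}, which gives, for $\beta\geq\beta_0$, the bounds $f_{k^*}^\cl(\beta)\leq E_{k^*}/k^* + C\beta^{-1}\log\beta$ and $f_\infty^\cl(\beta)\geq e_\infty - C\beta^{-1}$. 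Hence
\[
	k^*\bigl[f_\infty^\cl(\beta)-f_{k^*}^\cl(\beta)\bigr] \geq k^* e_\infty - E_{k^*} - C(1+\log\beta)\beta^{-1} = -(E_{k^*}-k^* e_\infty) - C(1+\log\beta)\beta^{-1}.
\]
Therefore $\frac1\beta\log\rho_\sat^\ideal(\beta) \geq \frac{\log k^*}{\beta} - (E_{k^*}-k^* e_\infty) - C(1+\log\beta)\beta^{-1}$, and letting $\beta\to\infty$ gives $\liminf_{\beta\to\infty}\frac1\beta\log\rho_\sat^\ideal(\beta) \geq -(E_{k^*}-k^* e_\infty)$.

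Now I would optimize over the choice of $k^*$. By the definition stated right before Theorem~\ref{thm:ideal}, $\nu^* = \inf_{k\in\N}(E_k - k e_\infty)$, so for any $\eta>0$ there exists $k^*$ (depending only on $\eta$, not on $\beta$) with $E_{k^*}-k^*e_\infty \leq \nu^* + \eta$. Plugging this particular $k^*$ into the previous display yields
\[
	\liminf_{\beta\to\infty}\frac1\beta\log\rho_\sat^\ideal(\beta) \geq -(\nu^*+\eta).
\]
Since $\eta>0$ was arbitrary, the claimed bound $\liminf_{\beta\to\infty}\frac1\beta\log\rho_\sat^\ideal(\beta)\geq-\nu^*$ follows.

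There is no real obstacle here: the only subtlety is making sure the near-optimal index $k^*$ can be chosen once and for all (independently of $\beta$) before taking the limit, which is legitimate because $\nu^*$ is an infimum over $k$ alone; and that the subexponential slack $C(1+\log\beta)\beta^{-1}$ from Lemma~\ref{lem:fk-lowtemp} indeed vanishes after dividing by $\beta$ and sending $\beta\to\infty$. If one wanted to avoid the two-step $\eta$-argument, one could instead keep $k^*=k^*(\beta)$ a minimizer of $E_k-ke_\infty$ (which exists and is bounded since $\nu^*$ is attained, as noted in \cite{CKMS10,jkm}), but the stated argument is cleaner and self-contained.
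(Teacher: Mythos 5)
Your proposal is correct and is essentially identical to the paper's own proof: both bound $\rho_\sat^\ideal(\beta)$ below by a single summand of \eqref{rhosatdef}, invoke Lemma~\ref{lem:fk-lowtemp} to identify the exponent as $-(E_k-ke_\infty)+o(1)$, and then optimize over $k$, which by $\nu^*=\inf_k(E_k-ke_\infty)$ gives the claim. Your explicit $\eta$-argument is just a spelled-out version of the paper's ``take the supremum over $k$ of the right-hand side.''
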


\begin{proof}
	Let $k \in \N$. We pick, as a lower bound, only the $k$-th summand in \eqref{rhosatdef}. Then, as $\beta \to \infty$, according to Lemma~\ref{lem:fk-lowtemp},
	\begin{align*}
		\frac 1\beta \log \rho_\sat^\ideal(\beta)  \geq \frac 1\beta \log k + k (f_\infty^\cl(\beta) - f_k^\cl(\beta)) = - (E_k - k e_\infty) + o(1).
	\end{align*}
	Letting $\beta \to \infty$ and taking the supremum over $k$ of the right-hand side yields the desired result. 
\end{proof}

\begin{proof}[Proof of Lemma~\ref{lem:idealtail}]
	Fix $\eps>0$. By Lemma~\ref{lem:rsat}, there is a $\beta_0>0$ such that 	for all $\beta \geq \beta_0$, $\e^{-\beta (\nu^*+\eps)}< \rho_\sat^\ideal(\beta)$. 
	Thus for $\beta \geq \beta_0$ and $\rho\leq \e^{ - \beta(\nu^*+\eps)}$, 	 we have
	$\rho < \rho_\sat^\ideal(\beta,\rho)$, and $\mu^\ideal(\beta,\rho)$ 
	solves \eqref{eq:musol}.

	Let $\mu_\eps:= \inf_{k\in \N} (E_k - \nu^*-\eps)/k$. Then $\mu_\eps<e_\infty$. 
	Indeed, by definition of $\nu^*$, there is a $k\in\N$ such that 
	$E_k - ke_\infty \leq \nu^*+\eps/2$, and for this $k$, 
	$\mu_\eps \leq (E_k - \nu^*-\eps) /k \leq e_\infty - \eps/(2k) < e_\infty$. 
	Choosing $k_0 \in \N$ such that $\mu_\eps = (E_{k_0} - \nu^* - \eps) /{k_0}$, 
	we find 
	\begin{equation*}
		\mu^\ideal(\beta,\rho) \leq f_{k_0}^\cl(\beta) + \frac{\beta^{-1}\log \rho}{k_0} \leq \frac{E_{k_0} - (\nu^*+\eps)}{k_0} + O(\beta^{-1} \log \beta) = \mu_\eps +O(\beta^{-1}\log \beta). 
	\end{equation*}
	Since $\mu_\eps<e_\infty$ and $[E_k - (\nu^*+\eps)]/k \to e_\infty$ as $k\to \infty$, 
	there  are $k_\eps \in \N$, $\Delta_\eps >0$ such that 
	\begin{equation*}
		k \geq k_\eps\qquad  \Longrightarrow\qquad  
			\frac{E_k - (\nu^*+\eps)}{k} \geq \mu(\nu^* + \eps) + \Delta_\eps.
	\end{equation*}
	 It follows that 
	 for $k \geq k_\eps$, 
	\begin{equation*}
		\frac{\rho_k^\ideal(\beta,\rho)}{\rho} 
			\leq \exp( - \beta k (\Delta_\eps +O(\beta^{-1} \log \beta))).
	\end{equation*}
	Choose $\beta_\eps\geq \beta_0$ large enough so that the $O(\beta^{-1} \log \beta)$ term is $\leq \Delta_\eps / 2$, then we find for $\beta \geq \beta_\eps$ 
	and $k \geq k_\eps$, 
	\begin{equation*}
		\frac{\rho_k^\ideal(\beta,\rho)}{\rho} \leq \exp( - \beta k \Delta_\eps / 2).
	\end{equation*}
	Noting that for $z<1$, as 
	$K \to \infty$, 
	$\sum_{k \geq K} k z^k = O(z^K),$
	we deduce that for sufficiently large $K$, 
	\begin{equation*}
	\frac{\sum_{k=K+1}^\infty
		 k \rho_k^\ideal(\beta,\rho)}{\rho } \leq \exp(- \beta K \delta_\eps).
	\end{equation*}
	Now fix 
	$K_1 \geq K_\eps$ large enough so that $\exp( - \beta_\eps \delta_\eps K_1) \leq 1/2$. 
	Then 
	\begin{equation*}
		\rho = \sum_{k=1}^{K_1} k \rho_k^\ideal + \sum_{k=K_1+1}^\infty  k \rho_k^\ideal(\beta,\rho) 
			\leq K_1 m^\ideal + \rho /2
	\end{equation*}		
	whence $ \rho \leq 2 K_1 m^\ideal$ and, for sufficiently large $K$, 
 	\begin{equation*}
	\frac{\sum_{k=K+1}^\infty
		 k \rho_k^\ideal}{m^\ideal} \leq 2 K_1 \exp(- \beta K \delta_\eps). \qedhere
	\end{equation*}
\end{proof}

\subsection{Proof of Theorem~\ref{thm:exponential}}\label{sect-proofthmexpo}

\subsubsection{Free energy $f(\beta,\rho)$} 

By Theorem~\ref{thm:bounds}, for all $\beta$ and $\rho$, 
$$f(\beta,\rho) = \inf_{\vect{\rho}} f(\beta,\rho,\vect{\rho}) \geq \inf_{\vect{\rho}} f^\ideal(\beta,\rho,\vect{\rho}) = f^\ideal(\beta,\rho). $$
Thus we need only prove the upper bound to $f(\beta,\rho)$. To this aim we note that 
\begin{equation*}
	f(\beta,\rho)  \leq f(\beta,\rho, \vect{\rho}^\ideal(\beta,\rho)) 
			= f^\ideal(\beta,\rho) + \Bigl( f(\beta,\rho,\vect{\rho}^\ideal(\beta,\rho)) 
			- f^\ideal(\beta,\rho,\vect{\rho}^\ideal(\beta,\rho)) \Bigr).
\end{equation*}
where we recall that $\vect{\rho}^\ideal(\beta,\rho)$ is the unique minimiser of $f^\ideal(\beta,\rho,\cdot)$. To lighten notation, we will drop the $(\beta,\rho)$-dependence in the notation 
and write $\vect{\rho}^\ideal$, $m^\ideal$,  instead of $\vect{\rho}^\ideal(\beta,\rho)$, 
$m^\ideal(\beta,\rho)$, etc. Theorem~\ref{thm:bounds} yields 
\begin{equation*}
	f(\beta,\rho,\vect{\rho}^\ideal) 
			- f^\ideal(\beta,\rho,\vect{\rho}^\ideal) 
	\leq \frac{C}{\beta} \Bigl( (\rho_{\leq K}^\ideal)^{(d+2)/(d+1)}
	+ \sum_{k=K+1}^\infty k \rho_k^\ideal\log \beta    - m_{>K}^\ideal \log m_{>K}^\ideal \Bigr),
\end{equation*}
provided $\beta \geq \overline{\beta}$, $\rho <\overline{\rho}$, and $K<(\rho/3)^{-1/(d+1)}$. 
By Lemma~\ref{lem:idealtail2}, 
if we assume $\beta \geq \beta_0$ and $K\geq K_0$, 
the term in the big parenthesis 
is bounded above by a constant times 
\begin{equation*}
	m^\ideal \Bigl( \rho^{1/(d+1)} + \e^{ - \beta c K^{1-1/d}} \log \beta + \beta c 
	K^{1-1/d} \e^{-\beta c K^{1-1/d}} \Bigr). 
\end{equation*}	
Choosing $K$ as a constant times $\rho^{-1/(d+1)}$, we see that the second and third 
summands are bounded by the first. This gives the desired bound on the free energy. 
\qed

\subsubsection{Minimisers} 
Every minimiser $\vect{\rho}$ of $f(\beta,\rho, \cdot)$ satisfies $f(\beta,\rho) = f(\beta,\rho, \vect{\rho}) \geq f^\ideal(\beta,\rho,\vect{\rho})$. Therefore, according to part (1),
\begin{equation*}
0 \leq f^\ideal(\beta,\rho,\vect{\rho}) - f^\ideal(\beta,\rho) \leq f(\beta,\rho) - f^\ideal(\beta,\rho)\leq \frac C \beta m^\ideal \rho^{1/(d+1)}.
\end{equation*}
Now by an explicit computation, 
\begin{equation*}
	     \beta f^\ideal(\beta,\rho, \vect{\rho}) - \beta f^\ideal(\beta,\rho) \\
		= H(\vect{\rho};\vect{\rho}^\ideal) + \rho_\infty (f_\infty^\cl(\beta) 
		  - \mu^\ideal),
\end{equation*}
where $\rho_\infty:=\rho-\sum_{k=1}^\infty k \rho_k$. 
Let $p_k:= \rho_k / m$ and $p_k^\ideal:= \rho_k^\ideal / m^\ideal$. Then 
\begin{equation*}
   H(\vect{\rho};\vect{\rho}^\ideal) = m^\ideal g\big(\smfrac  m{m^\ideal}\big) + m H(\vect{p};\vect{p}^\ideal),
\end{equation*}
where $g(x):= 1- x + x \log x$. Note that $g(x) \geq 0$ for all $x>0$. Summarizing the last three displays, we find, 
$$
m^\ideal g\big(\smfrac  m{m^\ideal}\big) + m H(\vect{p};\vect{p}^\ideal)+ \rho_\infty (f_\infty^\cl(\beta) 
		  - \mu^\ideal)\leq C m^\ideal \rho^{1/(d+1)}.
$$
Since each of the three terms on the left-hand side is nonnegative, we obtain that each of them is not larger than the right-hand side, and this implies
\begin{equation*}
	 g\big(\smfrac  m{m^\ideal}\big) \leq C  \rho^{1/(d+1)} \qquad \mbox{and}\qquad
	H(\vect{p};\vect{p}^\ideal)  \leq C \frac{m^\ideal}{m} \rho^{1/(d+1)}.
\end{equation*}
Since $g(x) \to 0$ implies $x \to 1$ and $g(x) \sim (1-x)^2/2$ as $x\to 1$, 
we deduce that for $\rho$ sufficiently small and some suitable constant $C'>0$, 
$|1- m /m^\ideal| \leq C' \rho^{1/(2d+2)}$, and the corresponding bound for the 
relative entropy easily follows.  
\qed

\subsection{Proof of Theorem~\ref{thm:ideal}}\label{sect-proofthmideal} 

The argument is exactly the same as for Theorem~\ref{thm:exponential}. 
In Theorem~\ref{thm:bounds} applied to $\vect{\rho}^\ideal$, we note that 
for $\rho\leq \rho_\sat^\ideal$, we have 
$\sum_1^\infty k \rho_k^\ideal =\rho$; this observation replaces the use of Hypothesis~\ref{collapse}. 
Later we use Lemma~\ref{lem:idealtail}  instead of 
Lemma~\ref{lem:idealtail2}. 

\appendix

\section{The idealised model in the Saha regime}\label{sect-Saha} 

\noindent In this section, we provide explicit bounds for the approximation of the minima and the minimisers of the idealised rate function $f^\ideal$ by the ones of the function $g_\nu$ that we introduced in \eqref{eq:ckms} and analysed in \cite{CKMS10} and \cite{jkm}. We work in the Saha regime, where $\rho=\e^{-\beta\nu}$ for some $\nu\in(0,\infty)$. Recall that the interaction potential $v$ is always supposed to satisfy Assumption (V). Let us first recall the relevant notation. 

The ground state energy $E_k$ was defined in~\eqref{Ekdef} and the quantities 
 $e_\infty= \lim_{k\to \infty} E_k/k$ and $\nu^*= \inf_{k\in \N}(E_k - ke_\infty)$  
were defined after~\eqref{Ekdef}. Set $\mu(\nu) := \inf_{k\in \N} [E_k- \nu]/k$. 
 From \cite[Lemma~1.3]{jkm} we know that the map $\nu\mapsto \mu(\nu)$ is piecewise affine. 
It is constant with value $\mu(\nu) = e_\infty$ for $\nu \in (0,\nu^*]$, and strictly decreasing in 
$[\nu^*,\infty)$. The set $\Ncal\subset[\nu^*,\infty)$ of points at which $\mu$ changes its slope is bounded and either infinite, with the unique accumulation point $\nu^*$, or finite. 
 Furthermore, for $\nu \in(\nu^*,\infty)\setminus \mathcal{N}$, we have $\mu(\nu) = [E_{k_\nu} - \nu]/k_\nu$ for a unique $k_\nu \in \N$, and 
\begin{equation}\label{Deltadef}
\Delta(\nu):= \inf \Bigl \lbrace \frac{E_k - \nu}{k} - \mu(\nu) 
			\ \Big |\ k \in \N,\ k \neq k_\nu \Bigr \rbrace 
\end{equation}
is strictly positive~\cite[Theorem 1.8]{jkm}.

A first quick consistency check concerns the comparison of the critical line $\rho = \exp(-\beta \nu^*)$ from~\cite{CKMS10,jkm} with the saturation density of the ideal mixture; this strengthens Lemma~\ref{lem:rsat}. \begin{prop}[Saturation density] \label{prop:satsaha}
	Suppose that $v$ satisfies Hypotheses~\ref{ass:hoelder}, \ref{ass:maxdist} and~\ref{secondorder} and $d\geq 2$.  
	Then, as $\beta \to \infty$, 
	$\rho_\sat^\ideal(\beta) = \exp( - \beta \nu^* + O(\log \beta))$. 
\end{prop}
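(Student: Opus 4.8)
The plan is to prove the two-sided bound $\rho_\sat^\ideal(\beta) = \exp(-\beta\nu^* + O(\log\beta))$ by separately establishing a lower and an upper bound on $\tfrac1\beta\log\rho_\sat^\ideal(\beta)$, each up to an error of order $\beta^{-1}\log\beta$. The lower bound is essentially Lemma~\ref{lem:rsat}, sharpened: picking in \eqref{rhosatdef} only the single term $k=k(\nu^*)$, where $k(\nu^*)\in\N$ satisfies $E_{k(\nu^*)} - k(\nu^*)e_\infty = \nu^*$, and inserting the low-temperature two-sided estimates of Lemma~\ref{lem:fk-lowtemp} ($f_k^\cl(\beta) = E_k/k + O(\beta^{-1}\log\beta)$ and $f_\infty^\cl(\beta) = e_\infty + O(\beta^{-1}\log\beta)$, uniformly for fixed $k$), we get
\begin{equation*}
	\frac1\beta\log\rho_\sat^\ideal(\beta) \geq k(\nu^*)\bigl(f_\infty^\cl(\beta) - f_{k(\nu^*)}^\cl(\beta)\bigr) + \frac{\log k(\nu^*)}{\beta} = -\nu^* + O\Bigl(\frac{\log\beta}{\beta}\Bigr).
\end{equation*}
Here Hypothesis~\ref{ass:hoelder} is what makes Lemma~\ref{lem:fk-lowtemp} two-sided; Hypothesis~\ref{secondorder} (and $d\geq2$) is not yet needed.

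For the upper bound, I would write $\rho_\sat^\ideal(\beta) = \sum_{k\in\N} k\,\e^{\beta k[f_\infty^\cl(\beta) - f_k^\cl(\beta)]}$ and split the sum at some threshold $K$. On the tail $k>K$, Hypothesis~\ref{secondorder} gives $f_k^\cl(\beta) - f_\infty^\cl(\beta) \geq Ck^{-1/d}$, so $\beta k[f_\infty^\cl(\beta)-f_k^\cl(\beta)] \leq -\beta C k^{1-1/d}$ and the tail is bounded by $\sum_{k>K} k\,\e^{-\beta C k^{1-1/d}}$, which for $d\geq2$ is summable and, for $K$ of order a constant, is at most $\e^{-\beta c}$ for some $c>0$ and $\beta$ large; this is negligible on the exponential scale compared with $\e^{-\beta\nu^*}$ once we check $c$ can be taken larger than $\nu^*$, or more robustly one lets $K$ be a large fixed constant so the tail contributes $O(1)$ relatively. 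On the head $k\leq K$, using $f_\infty^\cl(\beta) - f_k^\cl(\beta) = e_\infty - E_k/k + O(\beta^{-1}\log\beta) \leq -(E_k - ke_\infty)/k + O(\beta^{-1}\log\beta) \leq -\nu^*/k + O(\beta^{-1}\log\beta)$ (the last step by definition of $\nu^* = \inf_k(E_k - ke_\infty)$), each term is at most $k\,\e^{-\beta\nu^* + O(\log\beta)}$, and summing $K$ such terms keeps the bound of the form $\e^{-\beta\nu^* + O(\log\beta)}$. Combining head and tail yields $\tfrac1\beta\log\rho_\sat^\ideal(\beta) \leq -\nu^* + O(\beta^{-1}\log\beta)$.

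The main obstacle I anticipate is making the tail estimate genuinely sharp enough: one must ensure the tail $\sum_{k>K} k\,\e^{-\beta Ck^{1-1/d}}$ does not dominate $\e^{-\beta\nu^*}$. The cleanest route is to choose the threshold $K = K(\beta)$ growing slowly with $\beta$ — for instance $K \asymp (\log\beta)^{d/(d-1)}$ — so that on the tail $\beta C k^{1-1/d} \geq \beta C K^{1-1/d} \gg \nu^*\beta$ while the head still contains only $O((\log\beta)^{d/(d-1)})$ terms, each contributing a factor absorbed into the $O(\log\beta)$ error once we note $\log K = O(\log\log\beta)$. This forces Hypothesis~\ref{secondorder} and $d\geq2$ (for $d=1$ the exponent $1-1/d=0$ is useless), consistent with the hypotheses of the Proposition. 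The remaining steps — invoking Lemma~\ref{lem:fk-lowtemp} uniformly over the slowly growing range $k\leq K$, which requires the uniform-in-$k$ lower bound there plus the fixed-$k$ upper bound applied termwise — are routine, and the relative-entropy machinery of the earlier sections is not needed here at all.
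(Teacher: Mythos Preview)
Your overall strategy matches the paper's: lower bound via a single term at $k = k(\nu^*)$, upper bound via a head/tail split using Hypothesis~\ref{secondorder} on the tail and Lemma~\ref{lem:fk-lowtemp} on the head. The paper packages the tail estimate through Lemma~\ref{lem:idealtail2}, applied at $\rho \geq \rho_\sat^\ideal(\beta)$ so that $\rho_k^\ideal = \e^{\beta k[f_\infty^\cl - f_k^\cl]}$; this gives the tail as a small \emph{fraction} of $\rho_\sat^\ideal(\beta)$ itself,
\[
\sum_{k>K} k\, \e^{\beta k[f_\infty^\cl(\beta) - f_k^\cl(\beta)]} \leq \rho_\sat^\ideal(\beta)\, \e^{-\beta c K^{1-1/d}},
\]
and rearranging yields $\rho_\sat^\ideal(\beta) = \bigl(1 + O(\e^{-\beta c K^{1-1/d}})\bigr)\cdot(\text{head})$. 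Hence any \emph{fixed} $K$ large enough to contain a minimiser of $k\mapsto E_k - k e_\infty$ already gives head $= \e^{-\beta\nu^* + O(\log\beta)}$ and one never has to compare $CK^{1-1/d}$ with $\nu^*$.

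Your direct route works too, but only with a \emph{fixed} $K$ (chosen so that $C(K+1)^{1-1/d} > \nu^*$, which is possible since $d\geq 2$). Your proposed growing threshold $K \asymp (\log\beta)^{d/(d-1)}$ actually breaks the head estimate: from Lemma~\ref{lem:fk-lowtemp} one has
\[
\beta k\bigl[f_\infty^\cl(\beta) - f_k^\cl(\beta)\bigr] \leq -\beta(E_k - k e_\infty) + O(k\log\beta),
\]
so the error in the exponent is linear in $k$, not in $\log k$. Summing over $k \leq K(\beta)$ with $K(\beta)\to\infty$ then produces an error $O(K(\beta)\log\beta)$ in the exponent, which is not $O(\log\beta)$. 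The number of head terms is not the bottleneck; the per-term exponent is. So either keep $K$ fixed, or adopt the paper's self-referential rearrangement and sidestep the comparison altogether.
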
 

Next, we investigate the low-temperature asymptotics of $f^\ideal(\beta,\rho)$. 
Recall that the free energy is a sum of two terms, see~\eqref{eq:fideal}. We analyse them separately and  shall see that the dominant contribution comes from the term $\rho \mu^\ideal(\beta,\rho)$, which behaves like $\rho \mu(\nu)$. Observe that 
$\rho \mu(\nu)$ is precisely the approximation to the free energy $f(\beta,\rho)$ proven in~\cite{jkm}. 

\begin{prop}[Chemical potential] \label{lem:chempot}
	Suppose that $v$ satisfies Hypotheses~\ref{ass:hoelder} and \ref{ass:maxdist}. 
	Let $\nu >0$ and put $\rho = \exp(-\beta \nu)$. 
	Then,  as $\beta\to\infty$,
	\begin{itemize}
		\item if $\nu\in(\nu^*,\infty)\setminus \mathcal{N}$,
			\begin{equation}  \label{eq:musaha}
				\mu^\ideal(\beta,\rho) 
			 = f_{k_\nu}^\cl(\beta) - \frac{\nu}{k_\nu} - \frac{ \log k_\nu}{\beta} + O(\beta^{-1} \e^{-\beta \Delta(\nu)/2}) 
			=  \mu(\nu) + O\Big(\frac{\log \beta}{\beta}\Big),
			\end{equation} 
		\item if $\nu < \nu^*$ and $v$ also satisfies Hypothesis~\ref{secondorder}, and $d\geq 2$, then
		\begin{equation} \label{eq:musaha2}
		 \mu^\ideal(\beta,\rho) = f_\infty^\cl(\beta) =  e_\infty+ O\Big(\frac{\log \beta}{\beta} \Big)  = \mu(\nu) + O\Big(\frac{\log \beta}{\beta}\Big). 
		\end{equation}
	\end{itemize}
\end{prop}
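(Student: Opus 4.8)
The plan is to analyse the defining equation \eqref{eq:musol} for $\mu^\ideal(\beta,\rho)$ directly, treating the two regimes separately. First consider $\nu\in(\nu^*,\infty)\setminus\mathcal{N}$. By Lemma~\ref{lem:rsat} we know $\rho_\sat^\ideal(\beta)\geq\exp(-\beta\nu^*+o(\beta))$, so for large $\beta$ we have $\rho=\e^{-\beta\nu}<\rho_\sat^\ideal(\beta)$, and $\mu^\ideal(\beta,\rho)$ is the unique solution of $\sum_{k\geq1}k\,\e^{\beta k[\mu^\ideal-f_k^\cl(\beta)]}=\rho$. The key heuristic is that this sum is dominated by the single term $k=k_\nu$: write $\mu^\ideal=\mu(\nu)+\eta$ with $\eta=O(\beta^{-1}\log\beta)$ (the a~priori bound coming from Lemma~\ref{lem:fk-lowtemp} applied to $\mu^\ideal(\beta,\rho)\le f_{k_\nu}^\cl(\beta)+\beta^{-1}k_\nu^{-1}\log\rho$ and a matching lower bound). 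For $k=k_\nu$ one has $k_\nu[\mu(\nu)-f_{k_\nu}^\cl(\beta)]=k_\nu[\mu(\nu)-E_{k_\nu}/k_\nu]+O(\beta^{-1}\log\beta)=-\nu+O(\beta^{-1}\log\beta)$ by Lemma~\ref{lem:fk-lowtemp}, so the $k_\nu$-summand is $k_\nu\e^{\beta k_\nu\eta}\e^{-\beta\nu+O(\log\beta)}$; for $k\neq k_\nu$ the exponent $k[\mu(\nu)-f_k^\cl(\beta)]\leq -\nu/1-k\Delta(\nu)+O(\beta^{-1}\log\beta)$ using the definition \eqref{Deltadef}, and summing the geometric-type tail (as in the proof of Lemma~\ref{lem:idealtail}, $\sum_{k}kz^k=O(z)$ for $z$ small) gives total contribution $O(\e^{-\beta\nu-\beta\Delta(\nu)+O(\log\beta)})$. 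Dividing the equation $\rho=\e^{-\beta\nu}$ through by the dominant term and solving for $\eta$ yields $\e^{\beta k_\nu\eta}k_\nu\e^{O(\log\beta)}(1+O(\e^{-\beta\Delta(\nu)/2}))=1$, i.e. $\beta k_\nu\eta=-\log k_\nu-\beta(f_{k_\nu}^\cl(\beta)-E_{k_\nu}/k_\nu)k_\nu+\dots$; unwinding, this is exactly $\mu^\ideal(\beta,\rho)=f_{k_\nu}^\cl(\beta)-\nu/k_\nu-\beta^{-1}\log k_\nu+O(\beta^{-1}\e^{-\beta\Delta(\nu)/2})$, and the second equality follows from $f_{k_\nu}^\cl(\beta)=E_{k_\nu}/k_\nu+O(\beta^{-1}\log\beta)$ (Lemma~\ref{lem:fk-lowtemp}) together with $\mu(\nu)=[E_{k_\nu}-\nu]/k_\nu$ and $\log k_\nu=O(1)$.

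For $\nu<\nu^*$, invoke Hypotheses~\ref{ass:hoelder}, \ref{secondorder} and $d\geq2$: by the discussion after \eqref{rhosatdef} (and as used in Lemma~\ref{lem:idealtail2}), $\rho_\sat^\ideal(\beta)$ is finite and, by Proposition~\ref{prop:satsaha}, equals $\exp(-\beta\nu^*+O(\log\beta))$; hence for $\beta$ large, $\rho=\e^{-\beta\nu}>\rho_\sat^\ideal(\beta)$ since $\nu<\nu^*$. In this saturated regime the definition gives $\mu^\ideal(\beta,\rho)=f_\infty^\cl(\beta)$ directly. Then Lemma~\ref{lem:fk-lowtemp} yields $f_\infty^\cl(\beta)=e_\infty+O(\beta^{-1}\log\beta)$, and since $\mu(\nu)=e_\infty$ for $\nu\in(0,\nu^*]$ (recalled in Appendix~\ref{sect-Saha}), the chain of equalities \eqref{eq:musaha2} follows.

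The main obstacle is the first regime: making rigorous that the sum in \eqref{eq:musol} is dominated by $k=k_\nu$ with a quantitative, $\e^{-\beta\Delta(\nu)/2}$-type error, uniformly enough to extract the $\beta^{-1}\log k_\nu$ correction cleanly. This requires (i) a sharp two-sided a~priori bound on $\mu^\ideal(\beta,\rho)$ to control $\eta$ before bootstrapping, (ii) care that the $O(\beta^{-1}\log\beta)$ errors from Lemma~\ref{lem:fk-lowtemp} for $f_k^\cl(\beta)$ are uniform in $k$ (they are, by that lemma) so that the geometric tail estimate over $k\neq k_\nu$ genuinely closes, and (iii) checking that when $\nu$ is bounded away from $\mathcal{N}$, $\Delta(\nu)$ stays bounded below, which is exactly \cite[Theorem~1.8]{jkm}. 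Everything else — the geometric series, the inversion of $g$-type relations, the absorption of $\log$ factors — is routine bookkeeping of the kind already carried out in the proof of Lemma~\ref{lem:idealtail}.
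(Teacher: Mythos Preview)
Your proposal is correct and follows essentially the same route as the paper: for $\nu>\nu^*$ you analyse \eqref{eq:musol} and show the $k=k_\nu$ summand dominates with an $\e^{-\beta\Delta(\nu)/2}$-small remainder, and for $\nu<\nu^*$ you invoke Proposition~\ref{prop:satsaha} to land in the saturated regime and read off $\mu^\ideal=f_\infty^\cl(\beta)$. The only organisational difference is that the paper introduces the auxiliary variable $z=\exp\bigl(\beta[\mu^\ideal-f_{k_\nu}^\cl(\beta)+\nu/k_\nu]\bigr)$, so that the $k_\nu$-term becomes $k_\nu z^{k_\nu}$ and the one-sided bound $z\le 1$ drops out immediately from keeping that single term; this suffices to control the tail $\sum_{k\ne k_\nu}k z^k\e^{-\beta k\Delta(\nu)/2}$, and the matching lower bound on $z$ then comes out of the equation rather than being needed as an a~priori input---so your step~(i) (``a sharp two-sided a~priori bound'') is in fact unnecessary.
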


Next we state the behaviour of $m^\ideal(\beta,\rho) = \sum_{k\in \N} \rho_k^\ideal(\beta,\rho)$, the number of clusters per unit volume. Note that for an ideal mixture, this is essentially the same as the pressure, $\beta p^\ideal(\beta,\rho) = m^\ideal(\beta,\rho)$~\cite{hillbook}. 

\begin{prop}[Number of clusters (pressure)] \label{prop:msaha}
	Suppose that $v$ satisfies Hypotheses~\ref{ass:hoelder} and \ref{ass:maxdist}. 
	Fix $\nu >0$ and put $\rho = \exp(-\beta \nu)$. 
	Then, as $\beta\to\infty$,
	\begin{itemize}
		\item if $\nu \in (\nu^*,\infty)\setminus \mathcal{N}$, 
		$
			m^\ideal(\beta,\rho) = 
				\Bigl(1 + O(\e^{- \beta \Delta(\nu)/2})\Bigr) \rho /k_\nu,
		$
		\item if $\nu < \nu^*$ and in addition $v$ satisfies Hypothesis~\ref{secondorder}, and $d\geq 2$, 
	then $m^\ideal(\beta,\rho) = \exp( -\beta \nu^* + O(\log \beta)) = o(\rho)$. 		 
	\end{itemize} 
\end{prop}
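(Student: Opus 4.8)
The plan is to treat the two cases separately, in each reducing everything to facts already established about $\mu^\ideal$, about the tail of $\vect{\rho}^\ideal$, and about $\rho_\sat^\ideal(\beta)$; throughout I use $m^\ideal=\sum_{k\in\N}\rho_k^\ideal$ with $\rho_k^\ideal=\e^{\beta k[\mu^\ideal(\beta,\rho)-f_k^\cl(\beta)]}$. For $\nu\in(\nu^*,\infty)\setminus\mathcal{N}$ I would first record that we are in the gas phase: by Lemma~\ref{lem:rsat}, choosing $\eps\in(0,\nu-\nu^*)$, one has $\rho=\e^{-\beta\nu}<\e^{-\beta(\nu^*+\eps)}\leq\rho_\sat^\ideal(\beta)$ for all large $\beta$, so $\mu^\ideal$ solves \eqref{eq:musol} and in particular $\sum_{k\in\N}k\rho_k^\ideal=\rho$. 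The next step is to single out the term $k=k_\nu$ and to bound the remainder. For $k\neq k_\nu$, the definitions of $\mu(\nu)$ and $\Delta(\nu)$ together with the uniqueness of $k_\nu$ give $E_k/k\geq\mu(\nu)+\Delta(\nu)+\nu/k$; combining this with the lower bound $f_k^\cl(\beta)\geq E_k/k-C/\beta$ from Lemma~\ref{lem:fk-lowtemp} and with $\mu^\ideal(\beta,\rho)=\mu(\nu)+O(\beta^{-1}\log\beta)$ from Proposition~\ref{lem:chempot}, one obtains, for $\beta$ large \emph{uniformly in $k$}, $\mu^\ideal(\beta,\rho)-f_k^\cl(\beta)\leq-\tfrac12\Delta(\nu)-\nu/k$, and hence $\rho_k^\ideal\leq\rho\,\e^{-\beta k\Delta(\nu)/2}$. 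Summing the resulting geometric series shows $\sum_{k\neq k_\nu}\rho_k^\ideal=O(\rho\,\e^{-\beta\Delta(\nu)/2})$ and $\sum_{k\neq k_\nu}k\rho_k^\ideal=O(\rho\,\e^{-\beta\Delta(\nu)/2})$. Feeding the latter into $\sum_k k\rho_k^\ideal=\rho$ yields $k_\nu\rho_{k_\nu}^\ideal=\rho(1+O(\e^{-\beta\Delta(\nu)/2}))$, so that $m^\ideal=\rho_{k_\nu}^\ideal+\sum_{k\neq k_\nu}\rho_k^\ideal=\tfrac{\rho}{k_\nu}(1+O(\e^{-\beta\Delta(\nu)/2}))$, which is the first claim.

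For $\nu<\nu^*$ (with Hypothesis~\ref{secondorder} and $d\geq2$) I would instead invoke Proposition~\ref{prop:satsaha}, which gives $\rho_\sat^\ideal(\beta)=\exp(-\beta\nu^*+O(\log\beta))$; since $\nu<\nu^*$, this forces $\rho=\e^{-\beta\nu}\geq\rho_\sat^\ideal(\beta)$ for all large $\beta$, i.e.\ we are in the condensed phase, where $\mu^\ideal(\beta,\rho)=f_\infty^\cl(\beta)$ and $\rho_k^\ideal=\e^{\beta k[f_\infty^\cl(\beta)-f_k^\cl(\beta)]}$. Then on one side $m^\ideal=\sum_k\rho_k^\ideal\leq\sum_k k\rho_k^\ideal=\rho_\sat^\ideal(\beta)$ by \eqref{rhosatdef}, while on the other side the first bound in Lemma~\ref{lem:idealtail2} gives $\sum_k k\rho_k^\ideal\leq C\sum_k\rho_k^\ideal=Cm^\ideal$; hence $\tfrac1C\rho_\sat^\ideal(\beta)\leq m^\ideal\leq\rho_\sat^\ideal(\beta)$, i.e.\ $m^\ideal=\exp(-\beta\nu^*+O(\log\beta))$, which is $o(\e^{-\beta\nu})=o(\rho)$ because $\nu^*>\nu$.

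All the computations above are routine (geometric sums and elementary comparison of exponents), and the two cases are genuinely independent. The only mildly delicate point, and the one I would watch most carefully, is ensuring that the threshold on $\beta$ in the bound $\mu^\ideal(\beta,\rho)-f_k^\cl(\beta)\leq-\tfrac12\Delta(\nu)-\nu/k$ can be chosen independently of $k$; this is fine because the error terms in Lemmas~\ref{lem:fk-lowtemp} and~\ref{lem:chempot} are uniform in $k$ while $\Delta(\nu)$ is a fixed positive constant. Beyond this bookkeeping I do not foresee a real obstacle.
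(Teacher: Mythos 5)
Your proof is correct and follows essentially the same route as the paper: in the gas phase you isolate the $k=k_\nu$ term and control the remainder by a geometric series with gap $\Delta(\nu)/2$ (the paper phrases this via the auxiliary variable $z$ of \eqref{eq:z}, but the content is identical), and in the condensed phase you reduce to Proposition~\ref{prop:satsaha} via Lemma~\ref{lem:idealtail2}. Your uniformity check (that the $O(\beta^{-1}\log\beta)$ errors in Lemma~\ref{lem:fk-lowtemp} and Proposition~\ref{lem:chempot} do not depend on $k$) is exactly the point that needs care, and it holds as you say.
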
 

Finally, we analyse the behaviour of the minimiser of $f^\ideal(\beta,\rho,\cdot)$. 

\begin{prop}[Cluster size distribution] \label{prop:ideal-min}
	Suppose that $v$ satisfies Hypotheses~\ref{ass:hoelder} and~ \ref{ass:maxdist}. 
	Fix $\nu >0$ and put $\rho = \exp(-\beta \nu)$. 
	Then, as $\beta \to \infty$,
	\begin{itemize}
		\item if $\nu\in(\nu^*,\infty)\setminus \mathcal{N}$,
		\begin{equation}
			\label{knucluster}
			\frac{k_\nu \rho_{k_\nu}^\ideal(\beta,\rho)}{\rho} = 1 
				+ O(\e^{-\beta \Delta(\nu)/2}),
		\end{equation}
		\item if $\nu < \nu^*$ and in addition $v$ satisfies Hypothesis~\ref{secondorder}, and $d\geq 2$, 
		 \begin{equation}\label{Kclusters}
			\sum_{k =1}^{\infty} \frac{k \rho_{k}^\ideal(\beta,\rho)}{\rho} 
				= O( \e^{ - \beta(\nu^*-\nu) +O(\log \beta)})).
		\end{equation}
	\end{itemize}
\end{prop}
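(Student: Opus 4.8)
The plan is to read both bullets directly off the explicit description of the ideal minimiser in \eqref{eq:musol}--\eqref{eq:rhok}, combined with the chemical-potential and saturation-density asymptotics already established in Proposition~\ref{lem:chempot}, Proposition~\ref{prop:satsaha}, and Lemma~\ref{lem:rsat}. The only genuine input is identifying the phase (gas vs.\ condensed) in which the pair $(\beta,\rho)$ with $\rho=\e^{-\beta\nu}$ sits, since this fixes the form of $\mu^\ideal(\beta,\rho)$.

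\textbf{First bullet, $\nu\in(\nu^*,\infty)\setminus\mathcal{N}$.} First I would invoke Lemma~\ref{lem:rsat}: since $-\nu<-\nu^*\leq\liminf_{\beta\to\infty}\beta^{-1}\log\rho_\sat^\ideal(\beta)$, for all sufficiently large $\beta$ we have $\rho=\e^{-\beta\nu}<\rho_\sat^\ideal(\beta)$, so $\mu^\ideal(\beta,\rho)$ solves \eqref{eq:musol}, $\rho_{k_\nu}^\ideal=\e^{\beta k_\nu[\mu^\ideal-f_{k_\nu}^\cl(\beta)]}$ by \eqref{eq:rhok}, and moreover $\sum_{k}k\rho_k^\ideal=\rho$. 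Then I would substitute the expansion $\mu^\ideal(\beta,\rho)=f_{k_\nu}^\cl(\beta)-\nu/k_\nu-(\log k_\nu)/\beta+O(\beta^{-1}\e^{-\beta\Delta(\nu)/2})$ from the first part of Proposition~\ref{lem:chempot}; the $f_{k_\nu}^\cl(\beta)$ contributions cancel, leaving $\rho_{k_\nu}^\ideal=(\rho/k_\nu)\,\e^{O(\e^{-\beta\Delta(\nu)/2})}$, and expanding the exponential as $1+O(\cdot)$ gives \eqref{knucluster}. (As a consistency check, $\sum_{k\neq k_\nu}k\rho_k^\ideal=\rho-k_\nu\rho_{k_\nu}^\ideal=O(\rho\,\e^{-\beta\Delta(\nu)/2})$, so essentially all mass is concentrated on clusters of size $k_\nu$.)

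\textbf{Second bullet, $\nu<\nu^*$ with Hypotheses~\ref{ass:maxdist},~\ref{secondorder} and $d\geq2$.} Here Proposition~\ref{prop:satsaha} gives $\rho_\sat^\ideal(\beta)=\exp(-\beta\nu^*+O(\log\beta))$, which for large $\beta$ is $\ll\rho=\e^{-\beta\nu}$ because $(\nu^*-\nu)\beta$ dominates $\log\beta$; thus we are in the condensed phase and $\mu^\ideal(\beta,\rho)=f_\infty^\cl(\beta)$ by definition. Then by \eqref{eq:rhok} and the definition \eqref{rhosatdef} of the saturation density, $\sum_{k\geq1}k\rho_k^\ideal=\sum_{k\geq1}k\,\e^{\beta k[f_\infty^\cl(\beta)-f_k^\cl(\beta)]}=\rho_\sat^\ideal(\beta)$, and dividing by $\rho$ and using Proposition~\ref{prop:satsaha} once more yields \eqref{Kclusters}.

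Since everything reduces to results proved earlier in the section, there is no real obstacle. The only points requiring care are the phase identification --- making sure $\rho<\rho_\sat^\ideal(\beta)$ in the first case so that \eqref{eq:musol} is available, and $\rho>\rho_\sat^\ideal(\beta)$ in the second so that $\mu^\ideal=f_\infty^\cl$ and the partial sum collapses exactly to $\rho_\sat^\ideal(\beta)$ --- and tracking the exponentially small error term when passing from $\e^{O(\,\cdot\,)}$ to $1+O(\,\cdot\,)$.
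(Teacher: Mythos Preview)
Your proof is correct and follows essentially the same route as the paper: for $\nu\in(\nu^*,\infty)\setminus\mathcal N$ the paper invokes the identity $k_\nu\rho_{k_\nu}^\ideal/\rho=k_\nu z^{k_\nu}$ together with $k_\nu z^{k_\nu}=1+O(\e^{-\beta\Delta(\nu)/2})$ from the proof of Proposition~\ref{lem:chempot}, which is exactly what your substitution of the $\mu^\ideal$-expansion amounts to; for $\nu<\nu^*$ the paper, like you, observes $\rho>\rho_\sat^\ideal(\beta)$ via Proposition~\ref{prop:satsaha}, whence $\sum_k k\rho_k^\ideal=\rho_\sat^\ideal(\beta)$, and divides by $\rho$. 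One small arithmetic point: for your substitution in the first bullet to give $\rho_{k_\nu}^\ideal=\rho/k_\nu\cdot(1+O(\cdot))$, the constant term in \eqref{eq:musaha} must be $-\tfrac{\log k_\nu}{\beta k_\nu}$ rather than $-\tfrac{\log k_\nu}{\beta}$ (this is what actually follows from $k_\nu z^{k_\nu}=1+O(\cdot)$ and \eqref{eq:z}); your computation implicitly uses the corrected version.
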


The interpretation of \eqref{knucluster} is that all but an exponentially small fraction of particles are in clusters of size $k_\nu$, while the one of \eqref{Kclusters} is that the fraction of particle in 
finite-size clusters  goes to $0$ exponentially fast.

\begin{proof}[Proof of Proposition~\ref{prop:satsaha}] 
	Because of Lemma~\ref{lem:idealtail2}, for suitable $c>0$ and all sufficiently large $K \in \N$ 
	and sufficiently large $\beta$,
	\begin{equation*}
		\sum_{k=1}^K k Z_k^\cl(\beta) \e^{\beta k f_\infty^\cl(\beta)} \leq \rho_\sat^\ideal(\beta) \leq \sum_{k=1}^K k Z_k^\cl(\beta) \e^{\beta k f_\infty^\cl(\beta)} + \rho_\sat^\ideal(\beta) \e^{- \beta c K^{1-1/d}},
	\end{equation*}
	whence we see that
	\begin{equation*}
		\rho_\sat^\ideal(\beta) = \big(1+ O(\e^{- \beta c K^{1-1/d}})\big) \sum_{k=1}^K 
			k Z_k^\cl(\beta) \e^{\beta k f_\infty^\cl(\beta)}.
	\end{equation*}	
	The proof is concluded by choosing $K$ large enough so that 
	 every minimiser of $E_k - k e_\infty$ 
	is smaller or equal to $K$, since for such a $K$, the sum on the right-hand side of the 
	previous equation is $ \exp( - \beta \nu^* + O(\log \beta))$.  
\end{proof}

\begin{proof}[Proof of Proposition~\ref{lem:chempot}]
	Consider first the case $\nu\in(\nu^*,\infty)\setminus \mathcal{N}$. Hence, 
	$\mu(\nu) = (E_{k_\nu}- \nu)/k_\nu$ for a unique $k_\nu \in \N$ and 
	$(E_k - \nu)/k  - \mu(\nu) \geq \Delta(\nu) >0$ for all $k \neq k_\nu$. 	For sufficiently large $\beta$, we will have $\rho < \rho_\sat^\ideal(\beta)$ 
	and therefore the chemical potential is strictly smaller than $f_\infty^\cl(\beta)$ and is given 
	by the unique solution of equation~\eqref{eq:musol} which we rewrite as 	\begin{equation}\label{zpowerseries}
		1 = \sum_{k=1}^\infty k\,z^k\,
			\exp\left(- \beta k \left[  f_k^\cl(\beta)- \frac{\nu}{k}
			-  f_{k_\nu}^\cl(\beta)+ \frac{\nu}{k_\nu} \right]\right)
	\end{equation}
	with the auxiliary variable
	\begin{equation} \label{eq:z}
		z = z(\beta,\rho,\nu) := \exp(\beta \mu^\ideal(\beta,\rho)) \exp\Bigl( - \beta \Bigl[ f_{k_\nu}^\cl(\beta) - \frac{\nu}{k_\nu} \Bigl] \Bigr).
	\end{equation}
	We bound the sum in equation~\eqref{zpowerseries} from below by the summand for $k= k_\nu$. 
	This gives  
	$1 \geq k_{\nu} z^{k_\nu}$ and thus $z \leq 1$. Next, we choose $\beta_0$ 
	such that for all $\beta \geq \beta_0$ and all $k \neq k_\nu$,  the term in square 
	brackets in~\eqref{zpowerseries} is larger than $\beta \Delta(\nu)/2$. 
	Then
	\begin{equation*}
		1  \leq k_\nu z^{k_\nu} + \sum_{k \neq k_\nu} k e^{- \beta k \Delta(\nu)/2} 
		 \leq k_\nu z^{k_\nu} + \frac{\exp(- \beta \Delta(\nu)/2)}{(1-\exp(- \beta \Delta(\nu)/2))^2}.
	\end{equation*}
	Thus we get $k_\nu z^{k_\nu} = 1+ O(\exp(- \beta \Delta(\nu)/2))$ and~\eqref{eq:musaha} follows from~\eqref{eq:z}. 
	
	Now let us come to the case $\nu< \nu^*$. Because of Proposition~\ref{prop:satsaha}, 
	for sufficiently large $\beta$, we will have $\rho > \rho_\sat^\ideal(\beta)$ 
	and hence by definition $\mu^\ideal(\beta,\rho) = f_\infty^\cl(\beta)$. 
	Equation~\ref{eq:musaha2} is then a consequence of Lemma~\ref{lem:fk-lowtemp}. 
\end{proof}

\begin{proof}[Proof of Proposition~\ref{prop:msaha}]
	First we consider the case $\nu\in(\nu^*,\infty)\setminus \mathcal{N}$. With $z=z(\beta,\rho,\nu)$ from \eqref{eq:z}, by an argument similar to the proof of Proposition~\ref{lem:chempot}, 
	$m^\ideal(\beta,\rho)/\rho = z^{k_\nu} + O(\exp( -\beta \Delta(\nu))$. 
	Since we saw that $k_\nu z^{k_\nu} = 1 + O(\exp( - \beta \Delta(\nu)))$, we are done. 
	
	For the case $\nu< \nu^*$, we note that for sufficiently large $\beta$, 	$\rho >\rho_\sat^\ideal(\beta)$, hence $m^\ideal(\beta,\rho) = \sum_{k=1}^\infty Z_k^\cl(\beta) \exp( - \beta k f_k^\cl(\beta))$ and the claim follows by an argument similar to the 
	proof of Prop.~\ref{prop:satsaha}.   
\end{proof} 

\begin{proof} [Proof of Proposition~\ref{prop:ideal-min}]
The case $\nu\in(\nu^*,\infty)\setminus \mathcal{N}$ is a consequence of the identity  $k _\nu \rho_{k_\nu}^\ideal(\beta,\rho) /\rho = k_\nu z^{k_\nu}$ 
	and the argument in the proof of Proposition~\ref{lem:chempot}. 
	
In the case $\nu < \nu^*$ we just remark that for sufficiently large $\beta$, 
	$\rho > \rho_\sat^\ideal(\beta)$ hence 
	\begin{equation*}
		\sum_{k=1}^\infty \frac{k\rho_k^\ideal(\beta,\rho)}{\rho} 
		=\frac{\rho_\sat^\ideal(\beta)}{\rho} 
	\end{equation*}
	and the proof is concluded by applying Proposition~\ref{prop:satsaha}. 
\end{proof}

\medskip

\noindent \textbf{Acknowledgements}
We gratefully acknowledge financial support by the
DFG-Forscher\-gruppe~FOR718 ``Analysis and stochastics in complex physical systems''.\\

\end{document}